%% file: main.tex
\documentclass[12pt]{article}
\input{preamble}

\date{}
\title{Improved Speed via Regional Fulfillment\thanks{This material is based upon work supported in part by the Air Force Office of Scientific Research under award number FA9550-23-1-0031. D.~Hathcock supported by NSF Graduate Research Fellowship grant DGE-2140739.
}}

\author{Daniel Hathcock\thanks{Corresponding Author: {\tt dhathcoc@alumni.cmu.edu}, Carnegie Mellon University, Pittsburgh, PA, USA.}
\and 
R. Ravi\thanks{{\tt ravi@andrew.cmu.edu}, Ravi holds concurrent appointments as a Professor of Operations Research at Carnegie Mellon University and as an Amazon Scholar. This paper describes work performed at CMU and is not associated with Amazon.}
\and
Amitabh Sinha\thanks{{\tt amitabsi@amazon.com}, Amazon.com}
}

\begin{document}

\maketitle

\begin{abstract}
In e-retail, order fulfillment speed has become one of the most important metrics affecting customer satisfaction. While common wisdom dictates that maintaining a large global fulfillment network maximizes efficiency via economies of scale, recent evidence has shown that breaking up the network into smaller regions can yield significant speed improvements. In this paper, we consider a simple abstract model of order fulfillment by which we explain this phenomenon. We characterize fulfillment assignments satisfying an equilibrium condition based on the greedy fulfillment strategy, and quantify how the resulting fulfillment delay can be decreased by regionalizing the network. Finally, we provide some algorithmic results for computing low delay assignments, and some simulations supporting our equilibrium framework. 
\end{abstract}

{
  \small	
  \textbf{\textit{Keywords---}} Order fulfillment, E-retail, Linear Programming, Regionalization
}

\pagenumbering{gobble}

\pagenumbering{arabic}
\setcounter{page}{1}

\section{Introduction}

Winning the competition in e-retail is predicated on achieving a superior speed of order fulfillment. This has led to a variety of offerings such as `lightning' fulfillment (in under half hour~\cite{lightning}), sub--same-day, same-day and next-day fulfillment among e-retailers. In this setting, order management systems assign orders from customers to fulfillment centers (FCs) with the goal of meeting demand with as little delay as possible. A simple online strategy is to greedily assign an order to whichever FC can fulfill it most quickly - which is often the closest FC where the ordered product is in stock. The primary determinant of the time to fulfill an order is the distance to the customer 
so this requires FC locations and capacities that can support nearby fulfillment for all the different items that can be ordered. While the greedy strategy will usually fulfill orders using nearby FCs, imbalances in supply and demand may cause backlog to accumulate at some FCs, and in this case the greedy strategy will fulfill orders from FCs located farther away, decreasing the speed of fulfillment.

Recently, Amazon made a decision to 
regionalize their FC network~\cite{asn},
effectively splitting it into smaller geographic components, each of which independently runs the greedy strategy. At first glance, this appears surprising since the partitioning introduces more constraints in the fulfillment system. Indeed, the efficiency and cost savings of Amazon's network can be attributed to its global nature, taking advantage of economies of scale. However, if speed of fulfillment is a primary concern, we argue that regionalization can be used to improve the performance of the network in this respect. 
In this work, we present and analyze a simple model of regionalization that supports this viewpoint.

\paragraph{Backlog Equilibria:} We are given a set $\F$ of FC locations and a set $\D$ of demand locations with $\ell_{ij}$ representing the time it takes to satisfy the demand $i$ from FC $j$. 
To focus our model on the fulfillment speed aspect, we abstract away the complexities of fulfilling orders involving multiple items or different products, and consider the problem of fulfilling orders speedily for a single Stock Keeping Unit (SKU): Each FC $j$ has a capacity $C_j$ of units of this SKU that it can supply, and each demand location has demand $\Dem_i$ for this SKU. 
For simplicity, we assume $\sum_j C_j\geq \sum_i \Dem_i$. Moreover, if we assume that $\ell_{ij}$ is non-infinite for every pair, there is always a feasible matching of demand with supply from some FC. 

We are concerned with matchings of the demand to supply that follow the greedy strategy. Specifically, greedy fulfillments are matchings that assign demands only to FCs that incur the minimum delay for that demand. In general, this might result in a matching that exceeds the capacities for certain FCs. Therefore, we introduce the notion of \emph{backlogs} $\beta_j$ for each FC $j$, and define the delay of assigning $i$ to $j$ to be the time of fulfillment plus the backlog: $\ell_{ij} + \beta_j$. 
We think of the backlog as the amount of queued-up work at the FC represented in the same time units as the travel times between locations. Then the effective delay for fulfillment is the sum of the travel delay plus the accumulated backlog delay of processing at the FC.

The first question we address is whether there exist backlog values at the FCs, say $\beta_j \geq 0$ for all $j \in \F$, such that equilibrium is maintained and there is a feasible greedy assignment. Is there a simple characterization of these backlogs, and can they be found efficiently in arbitrary fulfillment networks? 

\paragraph{Regionalization:} To illustrate the application of understanding such backlog supported equilibria, consider the following example of a continuous demand instance on the line metric from \cite{sinha2026regionalize}. The line has unit length, and demand is uniformly generated at points in the line at the rate of one per unit time (day). The two FCs are located at 0 and 0.4 in the unit interval. The travel time for fulfillment is the distance to the FC. The fulfillment capacity of each FC is 0.5/day, thus the total capacity is exactly equal to the total demand. Since the left FC (FC1) at 0 is closer only to 0.2 of the demand, greedy assignment soon starts overloading the right FC (FC2). Eventually, in steady state, the right FC settles at a backlog of 0.2 units of demand, which with its processing rate of 0.5/day translates to 0.4 days of backlog. 
With this backlog, all demand in $[0,0.4)$ prefers FC1 to FC2 in terms of net delay. Also, all demand points in $[0.4,1]$ are indifferent between FC2 which has the backlog of 0.4 at location 0.4 or FC1 which has no backlog and at location 0. At equilibrium, this 0.6 measure of demand will be split as 0.1 to the left FC1 and 0.5 to FC2 so as to achieve load balance. However, this uniform proportional split leads to points in the whole interval $[0.4,1]$ to also be served by FC1 proportionally. The resulting average delay for fulfilling demand can be worked out to be 0.5.

On the other hand, suppose we split the interval into two segments $[0,0.5]$ and $[0.5,1]$ each served by FC1 and FC2 respectively, the average delay of fulfillment drops to 0.3. This illustrates the power of {\bf regionalization}: creating artificial regions within which we constrain greedy fulfillment. 

Using a discrete version of the simplified model above, we give partial answers to the questions: 
How much can regionalization decrease delay? Can good regionalizations be found efficiently? Along the way, we characterize backlog values that result in equilibria using the theory of bipartite matchings.

\subsection{Contributions}
Given an arbitrary fulfillment instance with at least as much supply as the demand, we give a characterization of backlog values that support greedy fulfillment equilibria where each demand is assigned to a supply node with minimum value of its travel time plus backlog, which also leads to a simple polynomial time algorithm to compute such values. Moreover, we show that any minimum travel time assignment in the original metric can be augmented with backlog values at the FCs to support an equilibrium solution, where all nodes are served by an FC with the smallest value of backlog plus travel time. 

We then consider the total delay (backlog plus travel time) incurred by all demands in such an equilibrium assignment. We show that if we partition the network into regions (regionalization), each of which has supply at least meeting demand, then as we use more and more regions, the total delay incurred by all the demand points in equilibrium decreases: this delay reaches the minimum possible value of the optimal travel time assignment on the original metric when there are as many regions as the number of FCs. In the other extreme, when there is only one region, we show that the delay can be a multiplicative factor of total demand away from the minimum total travel time, and that such instances can be constructed even on the line metric, extending the example above. 

Next, we turn to the question of finding a regionalization for which the equilibrium delay solution has the minimum total delay for all the demands, for any given fixed number of regions. We provide several examples in line and tree metrics suggesting that solving for the optimal such regionalization, even restricting to classes of regionalization with additional structure such as a contiguity requirement, is difficult. In particular, we show that for a fixed partitioning of the demands, the optimal assignment of FCs into regions to minimize total delay might differ from the global minimum travel time assignment. Moreover, the optimal contiguous regionalization might cost arbitrarily less than the optimal contiguous regionalization using the same assignments as a global minimum-delay assignment. Not only that, not insisting on contiguous regions can also significantly reduce the total delay when compared to the best solution with the contiguity restriction.

Given this difficulty, we turn to approximating minimum-delay solutions in the simplest case when there are no contiguity requirements on the regions. In general Euclidean metrics, we show that even a logarithmic number of regions achieves a solution with only a small multiplicative factor more total delay than the optimal total travel time in such metrics. The regions use ideas from the theory of network decompositions into small diameter regions.

We round out our analysis with some simulation studies on a synthetic two-dimensional instance with demands and supplies mimicking a US retailer, derived from public domain data on FC locations and population density. We introduce a new model trading off the imbalance in FC capacity and local demand between two extreme scenarios: In the balanced `Voronoi' scenario, we assign each population center to its closest FC and set the FC's supply to equal its assigned demand. In this case, the assignment minimizing travel time will be a feasible fulfillment with zero delay. In the other extreme, we give each FC equal capacity (equal to the total population demand divided by the number of FCs). Due to the asymmetry of FC locations compared to population density (arising from the fact that more densely populated areas are costlier to locate facilities), this gives a very unbalanced scenario. Our simulation traces the overhead of the total delay compared to the minimum possible total travel time as we trade off between these extremes. We then pick a single value of the imbalance mix to investigate a specific regionalization strategy in the literature. The result validates our model by showing a nearly 20\% reduction in total delay by using a regional decomposition hand crafted from the one shown in~\cite{sinha2026regionalize}.

\subsection{Related Work}

In online retail, the high value customers place on convenience and speed of fulfillment has driven research into new ways to optimize these metrics in fulfillment networks. For example, \cite{DWWY21} study how flexibility in the assignment of orders to fulfillment centers can improve service quality. \cite{XAG08} show how cost savings can be achieved by taking advantage of the real-time nature of order fulfillment to re-optimize assignments. Methods for initial inventory allocation in fulfillment centers using the greedy fulfillment strategy are studied by \cite{jasin2024inventory}. There are several surveys in this area, including on the rise of so-called ``Quick Commerce'' in India by \cite{ranjekar2023rise}, and fulfillment optimization in omnichannel operations by \cite{Jasin2019}. 

One recent optimization in fulfillment networks is the use of \emph{regionalization} to increase fulfillment speed. The idea of enforcing structure, thus limiting flexibility, was used by \cite{jordan1995principles} to increase the efficiency of manufacturing processes. Building on this idea, \cite{sinha2026regionalize} demonstrate the effectiveness of splitting Amazon's fulfillment network into regions. However, there has been limited research on the fundamental reasons for the observed improvement due to regionalization. 

Our characterization of equilibrium backlogs uses linear programming duality. There is a rich history of the use of duals for assignment problems, with applications to algorithms (\cite{kuhn55,Bertsekas1981,Balinski85} and generalizations to network flow~\cite{EK72}, to list just a few examples), pricing in economic markets (e.g., \cite{Shapley1971, demange86}), and more. One of the first examples is the ``Hungarian Method'' by \cite{kuhn55} to give an efficient combinatorial algorithm for the minimum cost assignment problem. \cite{demange86} use duals in a pricing algorithm to describe equilibria in multi-item auctions. They show that there is a unique price-optimal equilibrium, similar to our findings for backlogs but with a different underlying assignment problem.

\section{Preliminaries and Problem Statement}

\paragraph{Equilibrium Solutions and Delay} We denote the set of demands as $\D$ and the set of FCs as $\F$. We will always use $k$ to refer to the number of FCs, $\abs{\F}$. We suppose that $\D$ and $\F$ exist in a metric $\ell$, so that the length between a demand $i$ and FC $j$ is denoted $\ell_{ij}$. This represents the speed or travel time for service, that is, the time it takes FC $j$ to satisfy demand $i$. 

Each FC $j$ has a capacity $C_j$ of how much demand it can satisfy, and each demand $i$ has a demand of $\Dem_i$, and we will assume that $\Dem := \sum_i \Dem_i \leq \sum_j C_j$. 
One demand node $i$ can have its total demand satisfied by multiple different FCs. We will call an assignment of minimum total $\ell$-distance a \emph{minimum cost assignment}. This represents the minimum total distance incurred by all the demands in any satisfying assignment.

Turning to a greedy assignment, if we allow every demand to greedily choose an FC that minimizes $\ell_{ij}$, some FC $j$ may become overwhelmed, that is, the demand for it will exceed its capacity $C_j$. In this case, it will begin to accumulate \emph{backlog}, denoted $\beta_j$, so that the time it takes $j$ to service any demand $i$ is now $\ell_{ij} + \beta_j$. If enough backlog accumulates, demands may now wish to be served by a different FC in order to minimize service time. We are interested in equilibrium solutions to this dynamical system.

Note that the inputs to the dynamical system that we are modeling are demand and supply rates at locations, and during the arrival process, demand is routed to the FC with the lowest value of travel time plus accumulated backlog at the FC. The backlog of an FC represents the amount of time it would take the FC with supply rate per unit time equal to its capacity to ship an order for a new demand request (assuming FIFO fulfillment). If we measure these backlogs at FCs in the same time units as the travel time, then at equilibrium, we are looking for static backlog values at FCs to support an assignment that matches each demand with a minimum total delay (travel time plus backlog) FC. Thus, henceforth in this paper, we convert the demand and capacity rates to static values, and look for a backlog value in the same units as the travel times, but no longer concern ourselves with the time variations of the dynamical system.

\begin{definition}\label{def:equil}
    An \emph{equilibrium solution} is a set of backlogs $\beta_j \geq 0$ and an assignment of demand to FCs such that 1) each FC $j$ has at most $C_j$ demand assigned to it, 2) each demand $i$ is assigned only to FCs $j \in \argmin_{j \in \F} \ell_{ij} + \beta_j$, and 3) each FC $j$ with \emph{less than} $C_j$ demand assigned to it has $\beta_j = 0$. 
\end{definition}

Specifically, we are interested in equilibrium solutions in which demands are served as quickly as possible. For this, we define a notion of delay. 
\begin{definition}
    The \emph{delay} of an equilibrium solution is the sum of delays $\delta_i := \min_{j \in \F} \ell_{ij} + \beta_j$ experienced by each unit of demand. That is, $\sum_i D_i \delta_i$. 
\end{definition}

\paragraph{Regionalized Solutions: }

Due to the greedy nature of an equilibrium solution, it might require FCs to have backlogs in order to maintain equilibrium. Hence, even if the assignment used is a minimum cost assignment, the overall delay incurred could be much higher than the assignment cost. On the other hand, a minimum cost assignment may be enforced without any backlogs by sacrificing the equilibrium condition. This results in a solution with overall delay equal to the assignment cost; however, some demands may experience higher individual delay than they would upon making a local switch to a different FC. The greedy online nature of order management systems makes forcing such a matching logistically difficult so instead we consider an in-between: we split the space of demands into regions which are each restricted to the use of some subset of FCs. Then, an equilibrium solution is found within each region.

\begin{definition}
    An \emph{$r$-regionalized solution} is a partition of $\D$ and $\F$ into $r$ parts each and a pairing of the parts so that within each pair, $\sum_i \Dem_i \leq \sum_j C_j$ and the assignment within each pair is an equilibrium solution. 
\end{definition}

It may be natural to assume that the parts in the partition of $\D$ also obey some structural properties in the metric. For example, we may wish they were connected, convex, etc.

\paragraph{Problem Statements and Results.}

With these notions defined, we can formally state our main results. In \Cref{sec:characterization}, we characterize the backlogs in an equilibrium solution. We describe a linear program \eqref{eq:primal} and its dual \eqref{eq:dual}, then prove our main characterization theorem. 

\begin{restatable}{theorem}{equiloptimal}
\label{thm:equil=optimal}%
    The set of equilibrium solutions is precisely the set of optimal primal/dual solution pairs to \eqref{eq:primal} and \eqref{eq:dual}. That is, every equilibrium solution is a minimum cost assignment paired with backlogs equal to the $\beta$'s from an optimal dual solution to \eqref{eq:dual}, and vice versa.  
\end{restatable}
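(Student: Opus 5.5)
I will first fix the linear programs, since the statement refers to \eqref{eq:primal} and \eqref{eq:dual} without displaying them. The natural primal is the transportation LP: minimize $\sum_{i,j}\ell_{ij}x_{ij}$ subject to $\sum_j x_{ij}=\Dem_i$ for each $i\in\D$, $\sum_i x_{ij}\le C_j$ for each $j\in\F$, and $x\ge 0$. Its dual maximizes $\sum_i \Dem_i\delta_i-\sum_j C_j\beta_j$ subject to $\delta_i-\beta_j\le \ell_{ij}$ for all $i,j$, with $\beta\ge 0$ and $\delta$ free. Since $\sum_i\Dem_i\le\sum_j C_j$ and all $\ell_{ij}$ are finite, the primal is feasible and bounded, so strong duality holds. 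The proof will then be a direct comparison of the three equilibrium conditions in Definition~\ref{def:equil} with primal feasibility, dual feasibility, and complementary slackness.

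\textbf{(Equilibrium $\Rightarrow$ optimal pair.)} Take an equilibrium $(x,\beta)$ and set $\delta_i:=\min_j(\ell_{ij}+\beta_j)$. Then:
\begin{itemize}
\item $x$ is primal feasible: every demand is fully assigned, and condition 1) gives the capacity constraints.
\item $(\delta,\beta)$ is dual feasible by the definition of $\delta_i$ together with $\beta\ge 0$.
\item If $x_{ij}>0$, condition 2) gives $\delta_i=\ell_{ij}+\beta_j$, so the constraint for $x_{ij}$ is tight.
\item If $\beta_j>0$, condition 3) forces FC $j$ to be at capacity, so $\sum_i x_{ij}=C_j$.
\end{itemize}
These are exactly the complementary slackness conditions. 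Hence $x$ is a minimum cost assignment, $(\delta,\beta)$ is dual optimal, and the delay equals $\sum_i\Dem_i\delta_i$ as expected.

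\textbf{(Optimal pair $\Rightarrow$ equilibrium.)} Take primal optimal $x$ and dual optimal $(\delta,\beta)$. Complementary slackness holds for them, and the main task is to recover condition 2). Dual feasibility gives $\delta_i\le\min_j(\ell_{ij}+\beta_j)$. I then check the conditions in turn:
\begin{itemize}
\item Since $\Dem_i>0$ (zero-demand nodes can be discarded), some $x_{ij}>0$, and complementary slackness gives $\delta_i=\ell_{ij}+\beta_j$ for that $j$. So $\delta_i=\min_j(\ell_{ij}+\beta_j)$.
\item Every $j$ with $x_{ij}>0$ then attains this minimum, which is condition 2).
\item Condition 1) is primal feasibility.
\item Condition 3) is the slackness between $\beta_j$ and the capacity constraint.
\end{itemize}
The only real obstacle is this bookkeeping: being explicit that $\delta$ is free, so that optimality pins it to the minimum and the equilibrium's $\delta_i$ coincides with the dual variable. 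I would also remark that zero-demand nodes impose no constraint, so the correspondence is exact.

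One subtlety of the statement is that ``paired with backlogs from an optimal dual'' means any optimal $x$ combined with any optimal $\beta$. Complementary slackness between every primal optimum and every dual optimum is standard LP theory, so this pairing is valid and the sets coincide.
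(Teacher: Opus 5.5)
Your proposal is correct and follows essentially the same argument as the paper: both directions come from matching the three equilibrium conditions to primal feasibility, to dual feasibility with $\delta_i=\min_j(\ell_{ij}+\beta_j)$, and to complementary slackness. Your extra bookkeeping (strong duality from feasibility and $\ell\ge 0$, and using $D_i>0$ to pin $\delta_i$ to the minimum) only makes explicit what the paper leaves implicit.
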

Based on this characterization, we then present an alternate efficient combinatorial algorithm (based on shortest paths with negative weight arcs) to compute the minimum-delay equilibrium solution for an instance. 

In \Cref{sec:regionalization}, we investigate the effects of regionalization. We first prove that $k$ regions is sufficient to get the minimum possible delay. (Recall that $k = |\cal{F}|$.) 
\begin{restatable}{theorem}{kregmincost}
\label{thm:k-regions=min-cost}%
The delay incurred by an optimal $k$-regionalized solution in an instance with $D_i = 1$ is equal to the total cost of a minimum cost assignment. 
\end{restatable}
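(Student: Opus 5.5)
Two inequalities establish the theorem: every $k$-regionalized solution has delay at least the minimum assignment cost, and some $k$-regionalized solution attains it.

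\emph{Lower bound.} Take any $r$-regionalized solution. Inside each region the assignment is feasible and respects capacities, so the union over regions is a feasible global assignment. Each unit of demand $i$ assigned to $j$ in its region has $\delta_i = \ell_{ij}+\beta_j \ge \ell_{ij}$ because $\beta_j\ge 0$. Summing over all demand shows the delay is at least the cost of this global assignment, which is at least the minimum cost assignment. This holds for every $r$, in particular $r=k$.

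\emph{Upper bound.} First fix an integral minimum cost assignment $x$. One exists because the assignment LP \eqref{eq:primal} has a totally unimodular (transportation) constraint matrix, and the $\Dem_i=1$ and $C_j$ are integers; if the $C_j$ are not integers, replace them by $\lfloor C_j\rfloor$, which does not change feasibility for unit demands. With unit demands, each demand $i$ is then assigned to a single FC $j(i)$.

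Now build $k$ regions, one per FC: region $R_j$ consists of FC $j$ together with $\{i : j(i)=j\}$. This partitions $\F$ into $k$ parts, and $\D$ into $k$ parts (some possibly empty, which we allow as parts). In each region, the demand is $|\{i:j(i)=j\}| \le C_j$, so the supply condition holds. Inside $R_j$ there is only one FC, so set $\beta_j=0$ and assign every demand to $j$. Then:
\begin{itemize}
\item condition 1) of Definition~\ref{def:equil} holds by capacity;
\item condition 2) holds trivially, since $j$ is the only FC available and hence the argmin;
\item condition 3) holds since $\beta_j=0$.
\end{itemize}
So each region is in equilibrium, with delay $\sum_{i\in R_j}\ell_{ij}$. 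Summing over regions gives exactly the cost of $x$.

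\emph{Main obstacle.} The only delicate point is obtaining an integral, unsplit minimum cost assignment, so that each demand belongs to exactly one region; this is what the hypothesis $\Dem_i=1$ is for. I would invoke integrality of the transportation polytope to get it. If the definition of a partition forbids empty parts, I would fall back on merging an FC with no assigned demand into some other region. This keeps the equilibrium intact, because the merged FC has $\beta=0$; demands already at their assigned FC with $\beta=0$ could then prefer it, so I would instead give it a large backlog. That is permitted only if its load equals $C_j$, so the cleaner route is to allow empty demand parts.
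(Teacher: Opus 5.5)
Your argument is correct for integral capacities and is essentially the paper's proof. The construction is the same: one region per FC, demand parts read off from an unsplit minimum cost assignment, zero backlogs, and delay equal to the assignment cost. Your lower bound is exactly Lemma~\ref{lem:min-cost-LB}, which the paper uses for the matching inequality.

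One side remark is wrong. Replacing $C_j$ by $\lfloor C_j\rfloor$ can destroy feasibility and can change the minimum cost, and in fact the theorem is false without integral capacities. Take a single unit demand at $0$ and FCs at $0$ and $10$ with $C_1=C_2=\tfrac12$:
\begin{itemize}
\item $\lfloor C_1\rfloor+\lfloor C_2\rfloor=0<1$, so the floored instance is infeasible.
\item The only feasible assignment splits the demand and costs $5$.
\item Every $2$-regionalized solution must put both FCs in the demand's region, since one FC alone has too little supply.
\item In that region, condition 2) of Definition~\ref{def:equil} forces $\beta_1=10+\beta_2\ge 10$, so the delay is at least $10$.
\end{itemize}
So you should simply assume $C_j\in\mathbb{Z}_{\ge 0}$. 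The paper tacitly assumes this too, when it partitions $\D$ ``based on how the demands are assigned''.

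Separately, your hesitation about merging an unused FC $j'$ is unnecessary. Whenever $j'$ has a free unit of capacity, optimality of $x$ gives $\ell_{ij}\le\ell_{ij'}$ for every demand $i$ assigned to $j$. Hence $j'$ can sit in any region with $\beta_{j'}=0$ without breaking equilibrium. Allowing empty demand parts, as you conclude, is in any case the natural reading of the definition.
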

Then we give an upper bound on the benefit that can be realized by regionalizing vs the global (1-regionalized) solution. 
\begin{restatable}{theorem}{demupperbound}
\label{thm:dem-upper-bound}%
The optimal 1-regionalized solution has delay at most a $\Dem$-factor more than the cost of the minimum cost assignment.
\end{restatable}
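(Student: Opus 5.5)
By \Cref{thm:equil=optimal}, every equilibrium solution is a minimum cost assignment $x$ together with an optimal dual $(\alpha,\beta)$ of \eqref{eq:dual}. So the plan is to choose one particular optimal dual, show that every per-unit delay satisfies $\delta_i \le \mathrm{OPT}$, where $\mathrm{OPT}$ is the minimum assignment cost, and then sum to get delay $=\sum_i \Dem_i\delta_i \le \Dem\cdot \mathrm{OPT}$. I take the dual to have variables $\alpha_i$ (free) and $\beta_j\ge 0$, constraints $\alpha_i \le \ell_{ij}+\beta_j$, and objective $\sum_i \Dem_i\alpha_i-\sum_j C_j\beta_j$. At an optimum $\alpha_i=\delta_i$, and strong duality gives
\[
\text{delay} \;=\; \mathrm{OPT} + \sum_j C_j\beta_j .
\]
This identity is a sanity check but not the route to the bound. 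The direct route is to bound each $\delta_i$ separately.

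\textbf{Step 1 (choosing $\beta$).} Among optimal duals I take one minimizing $\sum_j \beta_j$; equivalently, I take the minimum-delay equilibrium. Build a directed graph on the FCs with an arc $j\to j'$ whenever some demand $i$ has $x_{ij}>0$ and the edge $(i,j')$ is tight, i.e.\ $\alpha_i=\ell_{ij'}+\beta_{j'}$.

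\textbf{Step 2 (every backlog drains to a zero-backlog FC).} Let $U$ be the set of FCs that cannot reach any FC with $\beta=0$ in this graph. Suppose $U\neq\emptyset$. Lower $\beta_j$ by a small $\varepsilon$ for $j\in U$, and lower $\alpha_i$ by $\varepsilon$ for every demand whose tight FCs all lie in $U$. The demands assigned to $U$ are exactly such demands, because any tight arc out of $U$ would leave $U$ reachable. Constraints to FCs outside $U$ were slack, so they stay feasible for small $\varepsilon$. Complementary slackness still holds, so the new dual remains optimal, yet it has strictly smaller $\beta$, a contradiction. Hence every FC $j_0$ with $\beta_{j_0}>0$ has a simple alternating path
\[
j_0,i_1,j_1,i_2,\dots,i_m,j_m,\qquad \beta_{j_m}=0,
\]
where $x_{i_tj_{t-1}}>0$ and $(i_t,j_t)$ is tight.

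\textbf{Step 3 (bounding $\delta_i$).} For a demand $i$ assigned to $j_0$, feasibility at $j_m$ gives $\delta_i\le \ell_{ij_m}$. I would then bound $\ell_{ij_m}$ via the metric by a sum of lengths of edges on the path. Relations along the path can be telescoped. Since $i_{t+1}$ is assigned to $j_t$ and $(i_t,j_t)$ is tight,
\[
\alpha_{i_t}-\alpha_{i_{t+1}} \;=\; \ell_{i_tj_t}-\ell_{i_{t+1}j_t}.
\]
Combined with $\alpha_{i_m}=\ell_{i_mj_m}$, the aim is to dominate $\delta_i$ by the lengths of the assignment edges $(i,j_0),(i_1,j_0),(i_2,j_1),\dots$, all of which lie in the support of $x$.

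\textbf{Step 4 (main obstacle).} The difficulty is that the tight edges $(i_t,j_t)$ are \emph{not} assignment edges, so their lengths are not directly charged to $\mathrm{OPT}$. I would first try to choose the path so that the triangle inequality $\ell_{i_tj_t}\le \ell_{i_tj_{t-1}}+\ell_{i_{t+1}j_{t-1}}+\ell_{i_{t+1}j_t}$, or an exchange/uncrossing argument using optimality of $x$, absorbs each tight edge into assigned ones. Optimality of $x$ should rule out the alternating cycle that would make tight edges long.

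A second issue is that $x$ may be fractional, so $x_{ij}\ell_{ij}$ rather than $\ell_{ij}$ appears in $\mathrm{OPT}$. I would handle this by taking $x$ to be an integral (vertex) minimum cost assignment, which \Cref{thm:equil=optimal} permits, and working in units of demand. In that setting the path uses distinct assigned unit edges, so their total length is at most $\mathrm{OPT}$. This gives $\delta_i\le\mathrm{OPT}$ for every $i$, and summing over the $\Dem$ units of demand yields the theorem.
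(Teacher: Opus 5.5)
\textbf{There is a genuine gap: your alternating path is oriented the wrong way, and this breaks both Step 2 and Steps 3--4.} Your overall plan is the paper's: take the minimum-delay duals and an integral minimum cost assignment $x$, show every $\delta_i\le\OPT$ by following tight edges to a zero-backlog FC, and sum over the $\Dem$ units.

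Your arc $j\to j'$ records that a demand \emph{assigned to} $j$ is tight at $j'$. In Step 2 you lower $\beta$ on $U$. But you leave $\alpha_i$ unchanged for a demand $i$ that is assigned outside $U$ and tight at some FC of $U$, so that tight constraint is violated. Nothing in your closure rules this case out, and it does occur. On the line, place an FC $j_a$ and a demand $i_1$ at $0$, a demand $i_2$ at $1$, and an FC $j_b$ at $11$, all unit.
\begin{itemize}
\item The unique minimum cost assignment is $i_1\to j_a$, $i_2\to j_b$.
\item The minimum-delay duals are $\beta_{j_a}=9$, $\beta_{j_b}=0$, $\delta_{i_1}=9$, $\delta_{i_2}=10$.
\item The only demand assigned to $j_a$ is $i_1$, which is not tight at $j_b$ ($9<11$). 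So in your graph $j_a$ cannot reach $j_b$.
\end{itemize}
Your perturbation then violates $\alpha_{i_2}\le 1+\beta_{j_a}$, and the path asserted at the end of Step 2 does not exist.

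The same orientation produces your Step 4 obstacle. Along your path, the telescoped expression for $\beta_{j_0}$ carries the tight non-assignment edges $(i_t,j_t)$ with a plus sign. Your proposed triangle inequality does not absorb them, because its middle term $\ell_{i_{t+1}j_{t-1}}$ is neither an assignment edge nor otherwise controlled. As written, Step 4 is an unproved hope rather than an argument.

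\textbf{The fix is to reverse the arcs.} Put $j\to j'$ when some demand tight at $j$ is assigned to $j'$. In Step 2, lower $\alpha_i$ for every demand tight at some FC of $U$. Closure of $U$ under out-arcs now forces such demands to be assigned only inside $U$. Hence feasibility and complementary slackness survive, so optimality does too, and you get your contradiction. The resulting path $j_0,i_1,j_1,\dots,i_m,j_m$ has $(i_t,j_{t-1})$ tight and $x_{i_tj_t}>0$. So for $i$ assigned to $j_0$,
\[
\delta_i=\ell_{ij_0}+\beta_{j_0}=\ell_{ij_0}+\sum_{t=1}^{m}\bigl(\ell_{i_tj_t}-\ell_{i_tj_{t-1}}\bigr)\le \ell_{ij_0}+\sum_{t=1}^{m}\ell_{i_tj_t},
\]
and every surviving term is an assignment edge. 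Taking the $j_t$ distinct makes these edges distinct. With integral $x$ the right side is then at most $\cost(x)=\OPT$, and no triangle inequality is needed.

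This is the paper's argument. It takes a path of tight edges from $i$ to a zero-backlog FC whose odd edges are assignment edges, drops the subtracted even edges, and bounds the rest by $\cost(x)$. The paper obtains that path by showing each connected component of the tight graph contains an FC with $\beta_j=0$, then simply asserting the odd edges can be taken to be assignment edges. Your corrected directed closure is a careful justification that such a path exists.
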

We show that the above lower bound can be achieved even in an instance on a line metric. 

In \Cref{sec:specific-metrics}, we provide some examples suggesting that solving for the minimum-delay regionalization exactly may be difficult. However in the line metric, we can give approximately optimal regionalizations using logarithmically many regions in the aspect ratio $\rho$ of the metric (defined as the ratio of the maximum to minimum distance between any pair of demand and facility points in the metric). 
\begin{restatable}{theorem}{constapproxeuclidean}
\label{thm:const-approx-euclidean}%
In the line metric with unit demands and capacities, there is a regionalized solution using $3 \log_2 \rho$ regions with total delay at most six times the minimum cost of any assignment. 
\end{restatable}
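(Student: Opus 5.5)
Our plan is to build the regionalization from a minimum cost assignment $M^*$, which by integrality of the transportation polytope we may take to be a perfect matching between unit demands and unit-capacity FCs on the line, with cost $\mathrm{OPT}=\sum_i \ell_{i,M^*(i)}$. We scale so that the minimum demand--FC distance is $1$ and the maximum is $\rho$. Each matched pair $(i,M^*(i))$ then has length in $[1,\rho]$, and we bucket the pairs by length class: class $t$ holds the pairs with $\ell_{i,M^*(i)}\in[2^{t-1},2^t)$, for $t=1,\dots,\lceil\log_2\rho\rceil$. Within class $t$ we impose a grid of intervals of width $3\cdot 2^t$ on the line, shifted by one of three offsets $0,2^t,2\cdot 2^t$. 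Each pair, which spans less than $2^t$, then lies entirely inside a single grid cell for at least one of the three offsets, and we assign it to the first such offset. This gives at most $3$ subclasses per length class.

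Each (class, offset) subclass becomes one region: it contains all demands and all their matched FCs from pairs in that subclass, drawn from many disjoint grid cells. Regions need not be contiguous, which is why we did not require contiguity. Every region is balanced, since its demand equals its capacity, so this is a valid $r$-regionalization with $r\le 3\lceil\log_2\rho\rceil$. (The factor $3\log_2\rho$ in the statement absorbs rounding, or the first class can be merged.)

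To bound the delay we apply \Cref{thm:equil=optimal} inside each region. Any equilibrium there uses a minimum cost assignment of that region together with an optimal dual $\beta$, and we choose the minimum-backlog dual, which the earlier algorithm computes. The key step is to bound each demand's equilibrium delay $\delta_i=\min_j \ell_{ij}+\beta_j$ by $O(2^t)$. The obstacle is that backlogs are global within a region, so a distant grid cell could in principle force large $\beta$. We would therefore argue cell by cell. Points in different cells of the same region are at distance at least $2\cdot 2^t$ from each other, because within a cell pairs span less than $2^t$ and cells have width $3\cdot2^t$. We also use the fact that the minimal dual backlog of an FC equals a shortest alternating-path length in the residual graph, where a step along a demand--FC edge costs $+\ell$ and a step back along a matched edge costs $-\ell$, as in the combinatorial algorithm. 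We would show that an alternating path leaving a cell gains length at least $2\cdot2^t-2^t>0$ per crossing, so the minimizing paths can be taken to stay within one cell. Inside a cell of width $3\cdot2^t$, every $\ell_{ij}\le 3\cdot 2^t$, and a balanced cell certainly has a feasible assignment. This is where we expect the main difficulty: getting a clean $\beta_j\le 2^t$-type bound from the alternating-path characterization. We would try to show directly that $\delta_i\le \ell_{i,M^*(i)}+\beta_{M^*(i)}$ and $\beta\le$ the within-cell diameter minus matched lengths.

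Finally we sum the delays. If the cell argument gives $\delta_i\le 6\cdot 2^{t-1}\le 6\,\ell_{i,M^*(i)}$, then summing over all demands yields total delay at most $6\,\mathrm{OPT}$. The constant has to be tuned through the backlog bound. If the direct bound is too weak, we will instead shrink cells relative to pair lengths, at the cost of more offsets.
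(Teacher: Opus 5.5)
\textbf{There is a genuine gap, and it sits in the construction of the regions, not only in the analysis.} The cells of a single shifted grid tile the line, so two pairs lying in adjacent cells of the same offset can be at distance $1$. Your claim that points in different cells of one region are at least $2\cdot 2^t$ apart does not follow from the cell width and is false. With it goes the ``each crossing costs at least $2^t$'' step.

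The construction itself then fails. Fix $t\ge 2$, and for $s=1,\dots,n$ place demand $i_s$ at $s2^t+\tfrac12$ and FC $j_s$ at $(s+1)2^t-\tfrac12$.
\begin{itemize}
\item The minimum cost assignment is $i_s\mapsto j_s$, with every pair of length $2^t-1$ (class $t$).
\item The minimum distance is $1$ (between $j_s$ and $i_{s+1}$), and $\rho=\Theta(n2^t)$.
\item Each pair lies in $(s2^t,(s+1)2^t)$, which sits inside a single cell of all three of your grids. So every pair goes to the offset-$0$ region.
\item In that region, complementary slackness forces $\delta_{i_{s+1}}=2^t-1+\beta_{j_{s+1}}$, while dual feasibility forces $\delta_{i_{s+1}}\le 1+\beta_{j_s}$.
\item Hence every equilibrium has $\beta_{j_s}\ge (n-s)(2^t-2)$, and total delay $\Omega(n^2 2^t)$.
\end{itemize}
The minimum cost is $n(2^t-1)$, so the ratio grows linearly in $n$. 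This is exactly the backlog chain of \Cref{thm:line-LB} that regionalization is supposed to break.

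\textbf{Two ingredients are missing.}

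First, pieces of the same region must be genuinely far apart. The paper assigns each pair by the position of its \emph{demand} only. At scale $[2^{d-1},2^d)$ it cuts the line into segments of length $2^{d+1}$, and region $\kappa$ takes the segments with index $\equiv\kappa\pmod 3$. Same-region segments are then $2\cdot 2^{d+1}$ apart. An FC sticks out of its demand's segment by less than $2^d$. So any demand and FC coming from different segments of a region satisfy $\ell_{ij}>3\cdot 2^d$. Meanwhile every demand is within $3\cdot 2^d$ of every FC matched into its own segment.

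Second, you need the lemma you left open. The paper proves it by an extremal argument rather than through path lengths: in a minimum-delay equilibrium, every segment contains an FC with $\beta=0$.
\begin{itemize}
\item Take the segment without such an FC whose minimum backlog is largest.
\item If no dual constraint between one of its FCs and an outside demand is tight, lower all its $\delta$'s and $\beta$'s slightly. The dual objective is unchanged because the segment is balanced, but the delay drops, a contradiction.
\item Otherwise a tight pair $(i,j)$ gives $\beta_j=\delta_i-\ell_{ij}<\delta_i-3\cdot2^d\le\beta_{j'}+\ell_{ij'}-3\cdot 2^d<\beta_{j'}\le\beta_j$. Here $j'$ is a minimum-backlog FC of $i$'s segment, and $\beta_{j'}\le\beta_j$ by the choice of segment. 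This is again a contradiction.
\end{itemize}
Hence every demand satisfies $\delta_i\le 3\cdot 2^d\le 6\,\ell_{iM^*(i)}$.

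Your shifted grids could be repaired by further splitting each offset's cells by parity. That gives separation $3\cdot 2^t$ against in-cell distances below $3\cdot 2^t$, and the same factor $6$. However, it costs $6$ regions per scale rather than the $3$ the statement allows.
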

This result has a natural generalization to higher dimensional Euclidean metrics that we also provide.

Finally, in \Cref{sec:experiments}, we conclude with some experimental results that demonstrate an application of our model to a synthetic instance meant to emulate a nationwide US e-retailer.

\section{Characterization of Backlogs}\label{sec:characterization}

In this section we shed light on the structure of equilibrium solutions and their relationship to minimum cost assignments. This allows us to efficiently compute the minimum-delay equilibrium solution for any instance. 

The structure arises from investigating the dual to the linear program describing the problem of finding a minimum cost assignment of demand to FCs. These primal and dual linear programs can be written as follows.

\leqnomode
\noindent\begin{minipage}[t]{0.45\textwidth}
    \begin{alignat*}{3}
        \tag{$\ms{P}$}\label{eq:primal}\\
        \lpobjective{\min}{\sum_{ij} \ell_{ij} x_{ij}}\\
        \stconstraint{\sum_{j} x_{ij}}{=}{D_i}{\forall i \in \D}\\
        \lpconstraint{\sum_i x_{ij}}{\leq}{C_j}{\forall j \in \F}\\
        \lpconstraint{x}{\geq}{0}{}
    \end{alignat*}
\end{minipage}
\quad \begin{minipage}[t]{0.5\textwidth}
    \begin{alignat*}{3}
        \tag{$\ms{D}$}\label{eq:dual}\\
        \lpobjective{\max}{\sum_{i \in \cal{D}} D_i \delta_i - \sum_{j \in \cal{F}} C_j \beta_j}\\
        \stconstraint{\delta_i - \beta_j}{\leq}{\ell_{ij}}{\forall ij \in \D \times \F}\\
        \lpconstraint{\beta}{\geq}{0}{}
    \end{alignat*}
\end{minipage}
\reqnomode

\equiloptimal*
\begin{proof}
    Consider any optimal primal/dual solution pair $x$ and $\delta, \beta$. First, observe that since $\beta_j \geq 0$ for every $j$, the $\beta$'s can be interpreted as backlogs, and $x$ gives a minimum cost assignment of demands to FCs. 

    We claim that this is an equilibrium solution using backlogs $\beta_j$. For every $i \in \D$, if $i$ is assigned to FC $j$, then complementary slackness implies that $\delta_i = \ell_{ij} + \beta_j$ while for every other $j' \in \F$, the dual constraint gives $\delta_i \leq \ell_{ij'} + \beta_{j'}$. So $j$ minimizes $\ell_{ij} + \beta_j$, as desired in an equilibrium solution. Moreover, for any FC $j$ with $\sum_i x_{ij} < C_j$, complementary slackness implies that $\beta_j = 0$. Hence, $x$ and $\beta$ together satisfy the equilibrium conditions in \Cref{def:equil}.

    The other direction is similar: it is easy to verify that the backlogs for any equilibrium solution give a feasible dual. Moreover, complementary slackness is satisfied with the primal defined from this equilibrium solution, so both the primal and dual solutions must be optimal. Hence, this is a minimum cost assignment. 
\end{proof}

With this characterization theorem in hand, we can construct a new linear program that describes the problem of finding the backlogs for an equilibrium solution that minimizes the total delay. 
\begin{corollary}\label{thm:efficient-min-delay}
A minimum-delay equilibrium solution can be computed in polynomial time.
\end{corollary}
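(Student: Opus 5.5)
By \Cref{thm:equil=optimal}, equilibrium solutions are exactly the pairs $(x,(\delta,\beta))$ with $x$ optimal for \eqref{eq:primal} and $(\delta,\beta)$ optimal for \eqref{eq:dual}. The delay of such a solution is $\sum_i D_i \delta_i$ with $\delta_i = \min_j \ell_{ij}+\beta_j$. The plan is therefore to optimize the delay over the face of optimal dual solutions, using a second linear program, and then pair the result with any optimal primal solution.

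First, solve \eqref{eq:primal} in polynomial time (it is a transportation problem) and record its optimal value $\mathrm{OPT}$. By strong duality, the optimal dual face is described by the linear constraints $\delta_i - \beta_j \le \ell_{ij}$, $\beta \ge 0$, and $\sum_i D_i\delta_i - \sum_j C_j\beta_j \ge \mathrm{OPT}$.

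One subtlety is that an optimal dual solution might have $\delta_i < \min_j(\ell_{ij}+\beta_j)$ when $D_i=0$. In that case the recorded $\delta_i$ is not the true delay. To handle this, I would restrict attention to demands with $D_i>0$; nodes with zero demand contribute nothing to the delay. For those nodes, I would argue that $\delta_i = \min_j(\ell_{ij}+\beta_j)$ automatically at any optimal dual: raising $\delta_i$ to that minimum keeps the dual feasible and strictly increases the dual objective, which contradicts optimality. Hence on the optimal face, $\sum_i D_i\delta_i$ is exactly the delay of the induced backlogs.

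Next, solve the LP that minimizes $\sum_i D_i\delta_i$ subject to the face constraints above. This is polynomial size, so it can be solved in polynomial time by the ellipsoid or interior point method. Its optimal $(\delta^*,\beta^*)$ is an optimal dual solution. By \Cref{thm:equil=optimal}, pairing $\beta^*$ with the optimal primal $x$ gives an equilibrium solution whose delay equals the LP value. Conversely, every equilibrium solution's backlogs give a feasible point of this LP with objective value equal to its delay. So the LP optimum is exactly the minimum delay.

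The only delicate point is the one already handled: making sure the LP objective equals the true delay. This relies on the fact that the minimization is taken over optimal dual solutions only, and at such solutions each $\delta_i$ with $D_i>0$ must be tight.
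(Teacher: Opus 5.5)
Your proposal is correct and is essentially the paper's proof: solve \eqref{eq:primal} to get $\OPT$, minimize $\sum_i D_i\delta_i$ over the optimal dual face, and pair the resulting $\beta$ with any minimum cost assignment via \Cref{thm:equil=optimal}. Your constraint $\ge \OPT$ is equivalent to the paper's $=\OPT$ by weak duality, and your check that $\delta_i=\min_j(\ell_{ij}+\beta_j)$ at every optimal dual when $D_i>0$ is a correct detail that the paper leaves implicit.
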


\begin{proof}
We can first find the $\ell$-cost of a minimum cost assignment by solving the assignment LP \eqref{eq:primal}. Call the cost of this solution $\OPT$. Then solve the following LP that finds the optimal dual whose delay is minimum: 
\begin{alignat*}{3}
\lpobjective{\min}{\sum_{i \in \D} D_i \delta_i}\\
\stconstraint{\delta_i - \beta_j}{\leq}{\ell_{ij}}{\forall ij \in \D \times \F}\\
\lpconstraint{\beta}{\geq}{0}{} \\
\lpconstraint{\sum_i D_i \delta_i - \sum_j C_j \beta_j}{=}{\OPT}{}
\end{alignat*}
By \Cref{thm:equil=optimal}, any optimal dual solution can be used as the backlogs for any minimum cost assignment to form an equilibrium solution. Hence, the $\beta_i$ values that arise from the solution to this linear program can be used with any minimum cost assignment to form an equilibrium solution. Optimality in the above linear program ensures that the delay of this equilibrium solution will be minimum. 
\end{proof}

In fact, it is immediately obvious that we could use a linear program to optimize any linear function of the delays and backlogs. 
\begin{corollary}
An equilibrium solution optimizing any linear function of the delays and backlogs can be computed in polynomial time. 
\end{corollary}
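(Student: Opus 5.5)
By \Cref{thm:equil=optimal}, an equilibrium solution is exactly a pair of an optimal primal solution to \eqref{eq:primal} and an optimal dual solution $(\delta,\beta)$ to \eqref{eq:dual}. So the set of equilibrium solutions is the product of two polyhedra: the optimal face of \eqref{eq:primal} and the optimal face of \eqref{eq:dual}. I will reduce the problem to linear programming over the dual optimal face, as in the proof of \Cref{thm:efficient-min-delay}. Here the objective is an arbitrary linear function $\sum_i a_i\delta_i + \sum_j b_j\beta_j$ in place of $\sum_i D_i\delta_i$.

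The steps are as follows.
\begin{enumerate}
\item Solve \eqref{eq:primal} in polynomial time. This gives a minimum cost assignment $x^*$ and its cost $\OPT$; by strong duality, $\OPT$ is also the optimal value of \eqref{eq:dual}.
\item Write the LP that optimizes $\sum_i a_i\delta_i + \sum_j b_j\beta_j$ subject to the constraints of \eqref{eq:dual} together with $\sum_i D_i\delta_i - \sum_j C_j\beta_j = \OPT$. Its feasible region is exactly the set of optimal dual solutions. Solve it in polynomial time to get $(\delta^*,\beta^*)$.
\item Pair $x^*$ with $(\delta^*,\beta^*)$. By \Cref{thm:equil=optimal}, any optimal primal solution paired with any optimal dual solution is an equilibrium solution.
\item For optimality among all equilibria: every equilibrium's backlogs and delays form an optimal dual solution, hence a feasible point of the LP in step 2. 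Its objective value is therefore no better than that of $(\delta^*,\beta^*)$.
\end{enumerate}

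One point needs checking. In an equilibrium, the delays are defined as $\delta_i = \min_j \ell_{ij} + \beta_j$. An optimal dual solution could in principle have $\delta_i$ strictly smaller than this, so the LP value might not equal the true delays. For an optimal dual with $D_i>0$, this cannot happen. The $\delta$-part of the objective has positive coefficients, and raising $\delta_i$ up to $\min_j(\ell_{ij}+\beta_j)$ keeps the solution feasible while increasing the objective, contradicting optimality. Also, complementary slackness with $x^*$ gives $\delta_i=\ell_{ij}+\beta_j$ for any $j$ serving $i$. So $\delta_i$ equals the true equilibrium delay.

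The main obstacle is unbounded objectives. The optimal dual face may be unbounded; for example, the $\beta$'s of unused FCs can grow freely once they exceed every relevant $\delta_i$. Then a linear objective favouring large backlogs has no optimum. In that case I would either restrict to objectives that are bounded below on the face, or report unboundedness. Polynomial-time LP detects unboundedness, so this does not affect the algorithmic claim. The statement should be read as: an optimal equilibrium is computed whenever one exists.
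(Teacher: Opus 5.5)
Your proposal is correct and follows the paper's route: reuse the LP from \Cref{thm:efficient-min-delay} over the face of optimal solutions to \eqref{eq:dual} with the given linear objective, and pair the result with any minimum cost assignment via \Cref{thm:equil=optimal}; your checks that $\delta_i=\min_j \ell_{ij}+\beta_j$ when $D_i>0$, and that the LP may be unbounded, are points the paper leaves implicit. One correction to your unboundedness example: an unused FC with $C_j>0$ is forced to $\beta_j=0$ by complementary slackness (equilibrium condition 3), so unboundedness actually comes from raising all $\delta_i$ and $\beta_j$ by the same amount when $\sum_j C_j=\sum_i D_i$, or from FCs with $C_j=0$.
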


We now present an alternate combinatorial method to find a minimum-delay equilibrium solution with better practical computational efficiency. This method requires that each demand node be fulfilled by only one FC. However, any instance can be made to satisfy this by breaking up the demands based on how they are assigned in an integral minimum cost assignment. Formally, if $x$ is an integral minimum cost assignment, then each demand $i \in \D$ can be split into at most $k$ (recall that $k$ is the number of FCs) demands $i_j$ having $D_{i_j} = x_{ij}$ units of demand. This results in an equivalent instance whose size is polynomial in $k$ and $\abs{\D}$, so the following procedure is a strongly polynomial time way to compute the minimum-delay equilibrium. 

The procedure proceeds as follows. Fix an integral minimum cost assignment of demands to FCs, $x$, such that each demand is entirely assigned to exactly one FC. We define the weighted directed graph $\mathcal{G}_x$ whose vertex set consists of a root $r$ and the set of FCs $\F$ and demands $\D$. There is an arc of weight 0 in $\mathcal{G}_x$ from $r$ to every FC $j \in \F$, as well as arcs from $j$ to $i$ for every demand $i$ assigned to $j$ in $x$, with weight equal to $-\ell_{ij}$. For each $j \in \F$ and each $i$ \emph{not} assigned to $j$, there is an arc from $i$ to $j$ with weight $\ell_{ij}$. See \Cref{fig:digraph-char}. Note that if we think of the assignment $x$ as describing a flow from the facilities to the demands they satisfy in $x$, the graph $\mathcal{G}_x$ is simply the residual graph of this flow.
\begin{figure}[h!]
\begin{centering}
\includegraphics[scale=0.7]{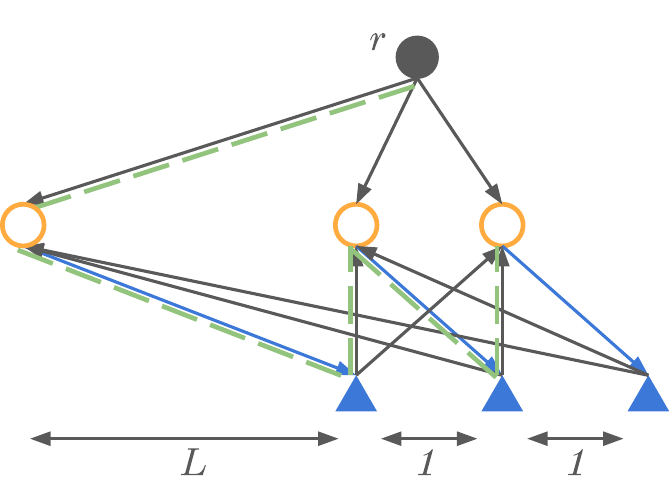} 
\caption{An example of the graph $\mathcal{G}_x$. The downward blue arcs come from the minimum cost assignment. The cost of arcs from $r$ is 0, and the cost of upward arcs is $\ell_{ij}$ while the cost of downward blue arcs is $-\ell_{ij}$. The dashed green path from $r$ shows the minimum weight path to the rightmost FC, whose backlog would be $\beta=L + 1$.}\label{fig:digraph-char}
\end{centering}
\end{figure}

\begin{theorem}
Fix a minimum cost assignment $x$. For every $j$, let $-\beta_j$ be the minimum weight of a directed path in $\mathcal{G}_x$ from $r$ to $j$, and similarly for each $i$ and $-\delta_i$. Then the resulting $\delta, \beta$ along with the assignment $x$ form a minimum-delay equilibrium solution. 
\end{theorem}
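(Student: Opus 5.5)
First I will check that the shortest-path distances are well defined. Since $x$ is a minimum cost assignment, the residual graph $\mathcal{G}_x$ has no negative-weight directed cycle. A cycle alternates FC $\to$ demand along an assigned arc (weight $-\ell_{ij}$) and demand $\to$ FC along an unassigned arc (weight $+\ell_{ij'}$). If such a cycle had negative weight, shifting a small amount of flow around it would give a feasible assignment of strictly smaller cost. This is the standard cycle-cancelling optimality criterion. Every node is reachable from $r$, so the potentials $-\beta_j$ and $-\delta_i$ are finite.

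Next I will verify dual feasibility from the triangle inequalities of shortest paths. Let $d(\cdot)$ denote shortest-path distance from $r$.
\begin{itemize}
\item The arc $r\to j$ has weight $0$, so $d(j)\le 0$, i.e.\ $\beta_j\ge 0$.
\item For an unassigned pair, the arc $i\to j$ gives $d(j)\le d(i)+\ell_{ij}$, i.e.\ $\delta_i-\beta_j\le \ell_{ij}$.
\item For an assigned pair, the arc $j\to i$ gives $d(i)\le d(j)-\ell_{ij}$, i.e.\ $\delta_i\ge \ell_{ij}+\beta_j$.
\end{itemize}
For the assigned pair I also need the reverse inequality. Demand $i$ is assigned only to $j$, so its only incoming arc is $j\to i$. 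Hence any shortest path to $i$ ends with that arc, giving $d(i)=d(j)-\ell_{ij}$ exactly. Thus $\delta_i=\ell_{ij}+\beta_j$ and the dual constraint is tight on assigned pairs.

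Then I will check the remaining equilibrium condition: an FC $j$ with slack, $\sum_i x_{ij}<C_j$, must have $\beta_j=0$. Suppose instead $d(j)<0$. Take a shortest path $r\to j_0\to i_1\to j_1\to\cdots\to j$; it alternates assigned arcs out of FCs and unassigned arcs into FCs. Reversing flow along this path, i.e.\ pushing one unit from $j$ back toward $j_0$, keeps each intermediate FC's load unchanged. It increases $j$'s load and decreases $j_0$'s load, and $j$ has slack. The cost changes by $d(j)-d(j_0)$ in a suitable sign convention, which is negative. This contradicts the minimality of $x$. I expect this sign bookkeeping, and the degenerate case where the path passes through $j$ itself, to be the fiddliest part, so I will write the cost change explicitly. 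With tightness and feasibility established, \Cref{thm:equil=optimal} together with the complementary slackness just verified shows that $(x,\beta)$ is an equilibrium solution.

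Finally, minimum delay. Let $(\delta',\beta')$ be the dual part of any equilibrium; it is optimal for \eqref{eq:dual} by \Cref{thm:equil=optimal}. Since $x$ is also a minimum cost assignment, complementary slackness pairs $x$ with $(\delta',\beta')$, so $\delta'_i=\ell_{ij}+\beta'_j$ on assigned pairs. Dual feasibility gives $\delta'_i-\beta'_j\le\ell_{ij}$ on every pair, and $\beta'_j\ge 0$. Together these mean $-\beta'$ and $-\delta'$ form a feasible potential on $\mathcal{G}_x$ with potential $0$ at $r$:
\begin{itemize}
\item arcs $r\to j$: $-\beta'_j\le 0$;
\item arcs $j\to i$: $-\delta'_i\le -\beta'_j-\ell_{ij}$;
\item arcs $i\to j$: $-\beta'_j\le-\delta'_i+\ell_{ij}$.
\end{itemize}
By induction along paths, any feasible potential is at most the shortest-path distance, so $-\delta'_i\le d(i)=-\delta_i$. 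Hence $\delta_i\le\delta'_i$ for every $i$, and $\sum_i D_i\delta_i$ is minimum. In fact this gives the stronger statement that the shortest-path solution minimizes every $\delta_i$ and $\beta_j$ simultaneously.
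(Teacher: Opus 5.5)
Your proposal is correct and follows essentially the paper's route. Both arguments get no negative cycles from optimality of $x$, dual feasibility from the shortest-path triangle inequalities, and tightness on assigned pairs because $j\to i$ is the only arc entering $i$; both prove minimality by bounding any equilibrium's $(-\delta',-\beta')$ by shortest-path distances, which the paper does by telescoping along one explicit path. Your separate argument that an FC with slack capacity gets $\beta_j=0$ supplies a step the paper's proof skips: the paper asserts complementary slackness after checking only tightness on assigned pairs, but this capacity condition is needed both for dual optimality and for condition 3 of \Cref{def:equil}.
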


\begin{proof}
First, we note that $\mathcal{G}_x$ cannot have any negative cycles, since this would correspond to an alternating cycle of assigned and unassigned edges along which one unit of demand could be swapped in order to decrease the cost of the assignment, contradicting the optimality of $x$. Hence, minimum weight paths in $\mathcal{G}_x$ are well-defined and can be found efficiently (e.g., via the Bellman-Ford algorithm). 

We claim that the resulting $\beta$'s are backlogs in a feasible equilibrium solution. First, observe that by the definition of $\mathcal{G}_x$ and the fact that we use minimum weight paths, for each $j \in \F$ we must have that $-\beta_j$ satisfies $-\beta_j \leq -\delta_i + \ell_{ij}$ for every $i \in \D$ \emph{not} assigned to $j$. For every $i \in \D$ assigned to $j$, the arc $(j, i)$ is the \emph{only} arc entering $i$,
so these must satisfy $-\delta_i = -\beta_j - \ell_{ij}$. In either case, we have for all $i$ that 
\[
    \delta_i - \beta_j \leq \ell_{ij}.
\]
This is precisely the dual constraint from \eqref{eq:dual}. It is not hard to see that also $\beta_j \geq 0$ for all $j$, and therefore the $\delta, \beta$ pair is a feasible dual solution to \eqref{eq:dual}. Moreover, the dual constraints are satisfied with equality for every $i, j$ for which $x_{ij} > 0$, so complementary slackness is satisfied and the $\delta, \beta$ pair is also an optimal dual solution. \Cref{thm:equil=optimal} then implies that $x$ and the $\beta$'s are an equilibrium solution. 

Finally, we argue that this equilibrium solution has minimum total delay $\sum_i D_i \delta_i$. In fact, we show the stronger statement that it minimizes $\delta_i$ \emph{simultaneously} for every $i \in \D$, among all equilibrium solutions. Fix $i \in \D$, and consider a minimum weight path from $r$ to $i$ used to define $\delta_i$, labeled $r, j_1, i_1, j_2, i_2, \dots, j_t, i_t = i$. In particular, we see by tracing this path that the value of $\delta_i$ is 
\[
    \delta_i = 0 + \left(\sum_{s = 1}^{t-1} \ell_{i_sj_s} - \ell_{i_sj_{s+1}}\right) + \ell_{ij_t}.
\]
But note that this path corresponds to a set of dual constraints that \emph{every} equilibrium solution must satisfy. Moreover, by \Cref{thm:equil=optimal}, every equilibrium solution must satisfy the constraints corresponding to arcs $(j_s, i_s)$ with equality by complementary slackness with $x$. Hence, for an arbitrary equilibrium solution with delays and backlogs $\delta', \beta'$ respectively, these must satisfy 
\[
    \ell_{i_sj_s} = \delta'_{i_s} - \beta'_{j_s} \leq \beta'_{j_{s+1}} + \ell_{i_sj_{s+1}} - \beta'_{j_s}
\]
for each $1 \leq s \leq t-1$. Rearranging and summing these constraints, we get
\begin{align*}
    \delta'_i &= \beta'_{j_t} + \ell_{ij_t}\\
    &\geq \left(\sum_{s = 1}^{t-1} \beta'_{j_{s}} + \ell_{i_sj_{s}} - \beta'_{j_{s + 1}} - \ell_{i_sj_{s+1}} \right) + \beta'_{j_t} + \ell_{ij_t}\\
    &= \beta'_{j_1} + \left(\sum_{s = 1}^{t-1} \ell_{i_sj_s} - \ell_{i_sj_{s+1}}\right) + \ell_{ij_t}\\
    &\geq \delta_i
\end{align*}
where the second equality is by telescoping, and the final inequality follows from $\beta'_{j_1} \geq 0$, and the definition of $\delta_i$. 

Since this holds for all $i \in \D$, and for any arbitrary equilibrium solution $\delta', \beta'$, it follows that $\delta, \beta$ is the equilibrium solution minimizing $\sum_i D_i \delta_i$. 

\end{proof}

An immediate consequence of the final part of the previous proof is that the set of delays for the minimum-delay equilibrium solution is unique, even though the assignment used might not be unique.
\begin{corollary}
The set of delays for a minimum-delay equilibrium solution is unique. 
\end{corollary}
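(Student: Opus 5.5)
The corollary follows from the pointwise minimality established at the end of the preceding proof. Since the uniqueness claim concerns the \emph{vector} of delays $(\delta_i)_{i \in \D}$, I would show that any two minimum-delay equilibrium solutions have identical delay vectors. The plan is a sandwich argument: the minimum-delay solution is at least as large as the pointwise-minimum vector in every coordinate and equal to it in weighted sum, so it must coincide with it.

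\textbf{Step 1: a pointwise-minimal reference vector.} Fix an integral minimum cost assignment $x$ (split demands as described so each demand node is served by exactly one FC). Construct $\mathcal{G}_x$ and let $\delta^\ast$ be the delays from shortest paths. The preceding theorem shows that $\delta^\ast$ comes from an equilibrium solution. Its final part shows more: for \emph{every} equilibrium solution with delays $\delta'$, we have $\delta'_i \geq \delta^\ast_i$ for every $i \in \D$. The only care needed is that the split instance is equivalent to the original. Split copies of the same demand node share one value $\delta_i = \min_j \ell_{ij}+\beta_j$, since this depends only on $i$ and $\beta$. Hence delays transfer back to the original nodes.

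\textbf{Step 2: sandwich any minimum-delay solution.} Let $(\delta',\beta')$ be the delays and backlogs of any minimum-delay equilibrium solution. Minimality gives
\[
\sum_i D_i \delta'_i \leq \sum_i D_i \delta^\ast_i .
\]
Step 1 gives $\delta'_i \geq \delta^\ast_i$ for every $i$. Combining the two, $\sum_i D_i(\delta'_i-\delta^\ast_i)=0$ is a sum of nonnegative terms, so $\delta'_i=\delta^\ast_i$ whenever $D_i>0$. Demand nodes with $D_i=0$ can be discarded, or one can note their delay is determined by $\beta'$ anyway. So every minimum-delay equilibrium solution has delay vector $\delta^\ast$.

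\textbf{Step 3: independence from the choice of $x$.} The vector $\delta^\ast$ was built from one particular $x$, but Step 2 shows every minimum-delay solution equals it. Two constructions from different assignments $x, \tilde x$ therefore agree, so the delays are unique even though the assignment need not be.

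\textbf{Main obstacle.} The only real subtlety is making sure the comparison in Step 1 applies to arbitrary equilibrium solutions, including those whose assignment is not $x$ or is fractional. This is handled by \Cref{thm:equil=optimal}: every equilibrium solution's $(\delta',\beta')$ is an optimal dual. Since all optimal duals satisfy complementary slackness with every optimal primal, they satisfy the tight constraints on the arcs of $x$ used in the telescoping argument.
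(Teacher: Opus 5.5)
Your proposal is correct and follows the paper's route: the paper states this corollary as an immediate consequence of the fact that the shortest-path delays in $\mathcal{G}_x$ are simultaneously minimal, coordinate by coordinate, over all equilibrium solutions, and your sandwich argument in Step 2 just spells out that deduction. One small remark: for nodes with $D_i=0$, discarding them is the clean fix; the alternative observation that their delay is determined by $\beta'$ gives uniqueness only once the backlogs themselves are shown unique, which the paper does in the next corollary using the delays of the positive-demand nodes.
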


The delays and backlogs must obey complimentary slackness with every minimum cost assignment, and hence by fixing any minimum cost assignment, we can see that the backlogs are entirely determined by the delays in an equilibrium solution. Therefore it immediately follows that in the minimum-delay solution, the backlogs are also unique.

\begin{corollary}
    The set of backlogs for a minimum-delay equilibrium solution is unique. 
\end{corollary}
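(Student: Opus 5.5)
The statement to prove is the last corollary: the backlogs in a minimum-delay equilibrium solution are unique. The plan is to combine the preceding corollary, which says the delay vector $\delta$ of a minimum-delay equilibrium is unique, with the complementary slackness structure from \Cref{thm:equil=optimal}. From these I will show that $\delta$ pins down every $\beta_j$.

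Let $(\delta,\beta)$ and $(\delta',\beta')$ come from two minimum-delay equilibrium solutions. By the previous corollary, $\delta=\delta'$. By \Cref{thm:equil=optimal}, both backlog vectors are optimal dual solutions to \eqref{eq:dual}. The key point is that $\delta_i$ in an equilibrium is defined as $\min_j \ell_{ij}+\beta_j$, so I must use the LP optimum, not the equilibrium's own assignment. So fix one minimum cost assignment $x$. By complementary slackness, both dual solutions are complementary to this same $x$.

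Now take any FC $j$ and split into two cases.

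\emph{Case 1: some demand is served by $j$ in $x$.} Pick $i$ with $x_{ij}>0$. Complementary slackness gives $\delta_i-\beta_j=\ell_{ij}$ and $\delta_i-\beta'_j=\ell_{ij}$. Since $\delta_i=\delta'_i$, this forces $\beta_j=\delta_i-\ell_{ij}=\beta'_j$.

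\emph{Case 2: $\sum_i x_{ij}=0$.} If $C_j>0$, then $j$ is below capacity, so complementary slackness on the capacity constraint forces $\beta_j=\beta'_j=0$.

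The main obstacle is the degenerate sub-case $C_j=0$, which I expect to be the only subtle step. Here $\beta_j$ appears in no objective term and no tight constraint, so an arbitrarily large $\beta_j$ remains feasible and optimal, and uniqueness literally fails. I would therefore assume, as is implicit in the paper's model, that all capacities are positive. Alternatively, I would restrict the statement to FCs with $C_j>0$, noting that an FC with $C_j=0$ can be deleted without changing anything.

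With that convention, every $\beta_j$ is a function of the unique $\delta$ and the fixed $x$. Hence $\beta=\beta'$, which proves the corollary.
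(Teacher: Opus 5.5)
Your proposal is correct and follows essentially the paper's own argument. Every optimal dual is complementary to every minimum cost assignment, so fixing one such $x$ lets the already-unique delays $\delta$ determine each $\beta_j$: for FCs used by $x$, $\beta_j=\delta_i-\ell_{ij}$, and for FCs with slack, $\beta_j=0$. Your explicit case split is more careful than the paper's one-line justification, and your degenerate-case caveat is genuine: if $C_j=0$, any sufficiently large $\beta_j$ gives an equally good minimum-delay equilibrium. So the corollary holds only under the paper's implicit assumption of positive capacities, or after deleting zero-capacity FCs.
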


\section{Regionalization}\label{sec:regionalization}

In this section, we investigate the extent to which regionalization can be used to decrease the delay incurred in an instance. First, we make an easy observation:  

\begin{lemma}\label{lem:min-cost-LB}
    The minimum cost assignment is a lower bound on the delay of \emph{any} regionalized solution.
\end{lemma}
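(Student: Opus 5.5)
The plan is to compare the delay of a regionalized solution with the cost of its own assignment, and then compare that cost with the global minimum.

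Fix an arbitrary $r$-regionalized solution. It consists of a partition of $\D$ into $\D_1,\dots,\D_r$ and of $\F$ into $\F_1,\dots,\F_r$, paired so that each region $(\D_s,\F_s)$ carries an equilibrium solution with backlogs $\beta_j\ge 0$ and an assignment $x^{(s)}$ of $\D_s$ to $\F_s$. Let $x$ be the union of the $x^{(s)}$.

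First, $x$ is feasible for the global program \eqref{eq:primal}:
\begin{itemize}
\item every demand $i\in\D_s$ receives exactly $D_i$ units, from FCs in $\F_s$;
\item every FC $j$ lies in exactly one part $\F_s$, so it receives at most $C_j$.
\end{itemize}
Hence its cost satisfies $\sum_{ij}\ell_{ij}x_{ij}\ge \OPT$, where $\OPT$ is the minimum cost assignment.

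Second, we bound the delay from below by this cost. Inside region $s$, the equilibrium conditions say that $i$ is assigned only to FCs $j\in\F_s$ attaining $\min_{j'\in\F_s}\ell_{ij'}+\beta_{j'}$. Here $\delta_i$ is this regional minimum, since in a regionalized solution demand $i$ may only use FCs of its own region. So $\delta_i=\ell_{ij}+\beta_j$ whenever $x_{ij}>0$. Since $\sum_j x_{ij}=D_i$, the delay is
\[
\sum_i D_i\delta_i=\sum_{i,j} x_{ij}(\ell_{ij}+\beta_j)\ \ge\ \sum_{i,j}x_{ij}\ell_{ij}\ \ge\ \OPT,
\]
where the first inequality uses $\beta_j\ge 0$.

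Alternatively, applying \Cref{thm:equil=optimal} within each region makes $x^{(s)}$ a minimum cost assignment for that region; the argument above, however, does not need this.

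The only delicate point is interpretive: the delay of a regionalized solution must be read with $\delta_i$ minimized over the FCs of $i$'s own region, not over all of $\F$. Once that is fixed, the argument is immediate.
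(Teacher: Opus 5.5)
Your proof is correct and follows the paper's own one-line argument: the regional assignments together form a feasible global assignment, and since backlogs are nonnegative the delay is at least its travel cost, hence at least the minimum assignment cost. You make explicit the two steps the paper leaves implicit, namely global feasibility and $\delta_i=\ell_{ij}+\beta_j$ for every pair with $x_{ij}>0$, where $\delta_i$ is taken over the FCs of $i$'s own region.
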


This simply follows from the fact that a regionalized solution still must satisfy all of the demand, and even with no backlogs, the total delay will be at least the minimum cost of any assignment. 

Our first result is the observation that with sufficiently many regions in an instance with unit demands $D_i = 1$, the delay can be decreased to the minimum possible value, corresponding to all FCs having zero backlog. Recall that $k$ refers to the number of FCs.

\kregmincost*

\begin{proof}
    This can be seen by partitioning $\F$ into $k$ parts, each containing exactly one FC. Then partition $\D$ into $k$ parts based on how the demands are assigned to the $k$ FCs in a minimum cost assignment. Pair up the parts according to this minimum cost assignment.

    Observe now that no backlogs are needed since each part is paired up with a single FC, so an equilibrium solution can only assign the demand in that part to that FC. Hence, $\delta_i = \ell_{ij}$ for each $i$ assigned to FC $j$, so the total delay is the cost of the assignment. 
\end{proof}

Next, we show that without regionalization, the total delay cannot be much higher than the cost of the minimum cost assignment. Recall that $\Dem$ denotes the total demand $\sum_i \Dem_i$. 

\demupperbound*
\begin{proof}
    To prove this, we investigate the structure of the duals for the assignment LP. Consider an optimal 1-regionalized solution and recall from \Cref{thm:equil=optimal} that this corresponds to an optimal primal/dual pair $x$ and $\delta, \beta$. Consider the graph $T$ of tight edges in the assignment LP. That is, those $i, j$ pairs for which 
    \[
        \delta_i = \ell_{ij} + \beta_j
    \]
    We first claim that every connected component of $T$ has some FC $j$ with $\beta_j = 0$. Otherwise, we could uniformly reduce the $\delta$'s and $\beta$'s for all demands and FCs in that component by some small amount without violating any dual constraints or changing the objective. However, this would reduce the \emph{delay} $\sum_i D_i \delta_i$ of the solution, contradicting the original assumption that the 1-regionalized solution is optimal. 

    Fix an (integral) minimum cost assignment $x$ whose cost we denote by $\cost(x)$. Within a connected component in $T$, for any demand $i$ in the component, there is a path $e_1, e_2, \dots$ of tight edges from $i$ to an FC $j$ with  $\beta_j = 0$. And since all demand is served in the assignment, we can assume that each odd edge $e_1, e_3, \dots$ is an assignment edge used by $x$. Since all of these edges are tight, we have 
    \[
        \delta_i = \sum_{s \text{ odd}} \ell_{e_s} - \sum_{s \text{ even}} \ell_{e_s} \leq \sum_{s \text{ odd}} \ell_{e_s} \leq \cost(x)
    \]
    Hence, $\sum_i D_i \delta_i \leq \sum_i D_i \cost(x) = \Dem \cdot \cost(x)$. 
\end{proof}

In fact, we can show that the above theorem is tight up to constant factors by exhibiting an example instance on a line metric. 

\begin{theorem}\label{thm:line-LB}
    There is an example on the line metric in which the delay of the optimal 1-regionalized solution is $\Omega(\Dem)$ more than the delay of the optimal $k$-regionalized solution. 
\end{theorem}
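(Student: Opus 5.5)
The plan is an explicit line instance in which a minimum cost assignment must ``shift'' demand one step along the line. Equilibrium then forces backlogs that grow linearly along the chain, so every demand pays a delay proportional to its position while the assignment cost stays linear. We compare against \Cref{thm:k-regions=min-cost}, which says the optimal $k$-regionalized delay equals the minimum assignment cost (all demands will be unit).

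\textbf{Instance.} Place FCs at positions $0,1,\dots,n$ on the line, each with capacity $1$. Place unit demands at positions $1,2,\dots,n$, plus one additional unit demand also at position $n$. Total demand and total capacity are both $n+1$, so $\Dem = n+1$.

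\textbf{Step 1: the minimum cost.} On the line, sorted matching is optimal. Match the sorted demands $1,2,\dots,n,n$ to the FCs $0,1,\dots,n$. Each of the first $n$ demands travels distance $1$ and the last travels $0$, so the minimum cost is $n$. One can double-check this by a cut argument: every point in each interval $(j-1,j)$ must be crossed by one unit of flow. By \Cref{thm:k-regions=min-cost}, the optimal $k$-regionalized delay is exactly $n$.

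\textbf{Step 2: the 1-regionalized delay, via the path argument.} Fix the assignment $x$ that sends the demand at $j$ (for $j=1,\dots,n$) to FC $j-1$, and the extra demand at $n$ to FC $n$. By the path-based theorem of \Cref{sec:characterization}, the minimum-delay equilibrium gives each $\delta_i$ the smallest value compatible with the dual constraints, and these values are forced by any alternating path starting at an FC with $\beta\ge 0$. Tracing the chain:
\begin{itemize}
\item $\beta_0 \ge 0$ and the tight edge for demand $1$ give $\delta_1 \ge 1$.
\item Demand $1$ must not prefer FC $1$, which is at distance $0$; hence $\beta_1 \ge \delta_1 \ge 1$.
\item Demand $2$ is tight with FC $1$, so $\delta_2 = \beta_1 + 1 \ge 2$.
\end{itemize}
Continuing inductively gives $\beta_j \ge j$ and $\delta_j \ge j$ for every $j$. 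Equivalently, the telescoping inequality in the proof of that theorem, applied along the path $r,0,1,1,2,2,\dots$ in $\mathcal{G}_x$, shows that every equilibrium satisfies $\delta'_j \ge j$. The values $\beta_j=j$, $\delta_j=j$ (and $\delta=n$ for the extra demand at $n$) are dual feasible, so they are attained. Hence the optimal 1-regionalized delay is at least $\sum_{j=1}^n j = \Omega(n^2)$.

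\textbf{Conclusion.} The ratio between the 1-regionalized delay and the $k$-regionalized delay is $\Omega(n^2)/n = \Omega(n) = \Omega(\Dem)$, matching \Cref{thm:dem-upper-bound} up to constants.

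\textbf{Main obstacle.} The main obstacle is ensuring the lower bound holds for \emph{every} equilibrium, not just the one computed from $x$. This is handled by the simultaneous-minimality statement, once we verify two facts about $\mathcal{G}_x$:
\begin{itemize}
\item it has no negative cycle, which holds because $x$ is optimal;
\item the alternating path above exists in it: for each $j$, an arc from FC $j-1$ to demand $j$ with weight $-1$, followed by an arc from demand $j$ to FC $j$ with weight $0$.
\end{itemize}
The minimum weight path to demand $j$ therefore has weight at most $-j$. This gives $\delta_j \ge j$ directly, since $\delta_j$ is the negation of that path weight.
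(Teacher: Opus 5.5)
Your proof is correct, but it uses a different instance from the paper's. The paper puts a heavy demand node of size $D'$ at the head of a chain of unit segments whose last segment has length $L+1$. That single long edge forces every FC in the chain to carry backlog at least $L$, so the heavy node alone pays about $D'L$, while the $k$-regionalized cost is only $D'+k-1+L$. Letting $L\to\infty$ drives the ratio to $D'+k-1=\Dem$.

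Your instance is uniform: all unit demands and capacities, one surplus FC at the left end and one surplus demand at the right end. The large delay comes not from one long edge but from the accumulation of unit shifts along the chain. You get $\beta_j=\delta_j=j$, so the total delay is $n(n+3)/2$ against assignment cost $n$, a ratio of $(\Dem+2)/2$.

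Each approach buys something different:
\begin{itemize}
\item The paper's construction shows that the $\Dem$-factor upper bound of \Cref{thm:dem-upper-bound} is tight in the limit $L\to\infty$. Yours loses a factor of about $2$.
\item Your instance has no parameters to balance and stays entirely within the unit demand/capacity setting in which \Cref{thm:k-regions=min-cost} and \Cref{thm:const-approx-euclidean} are stated. It therefore shows that the $\Theta(\Dem)$ gap already occurs there.
\item The paper simply asserts the required backlogs. You explicitly justify, via the alternating path in $\mathcal{G}_x$ and complementary slackness with $x$, that \emph{every} equilibrium charges at least $j$ to demand $j$.
\end{itemize}

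One small slip, which does not affect the argument: dual feasibility alone does not make $\beta_j=\delta_j=j$ an equilibrium. You also need tightness on the edges of $x$, which does hold. In any case only the lower bound is needed.
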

\begin{proof}
Consider a line divided into $k$ segments. The leftmost $k-1$ segments are of unit length, and the rightmost segment has length $L + 1$. The leftmost segment contains $D'$ units of demand at its left endpoint and an FC of capacity $D'$ at its right endpoint. Every other segment contains a single unit of demand at its left endpoint and an FC with capacity 1 on its right endpoint. Label the demands and FCs $i_1, i_2, \dots$ and $j_1, j_2, \dots$, respectively, from left to right. See \Cref{fig:line-LB}.

\begin{figure}[h!]
\begin{centering}
\includegraphics[scale=0.7]{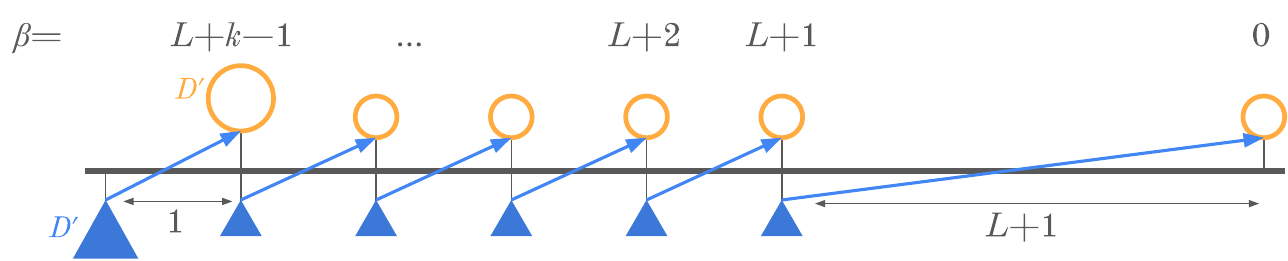} 
\caption{The instance on the line demonstrating a high-delay 1-regionalized solution. Triangles are demands, and circles are FCs. The leftmost demand node and FC have demand and capacity $D'$, respectively. Backlogs are shown above each FC.}\label{fig:line-LB}
\end{centering}
\end{figure}

In this instance, the optimal assignment just assigns $i_s$ to $j_s$ for every $s = 1, \dots, k$. In the 1-regionalized solution, this requires backlogs $\beta_{s} := L + (k-s)$ for $s \leq k-1$, and $\beta_{k} = 0$. Then the delay of the 1-regionalized solution is 
\begin{align*}
    &D'(\beta_1 + 1) + \sum_{s = 2}^{k-1} (\beta_s + 1) + \beta_k + (L + 1) \\
    =\ & D'(L +  k) + (k-2)(L + 1) + \frac{(k-1)(k-2)}{2} + L + 1\\
    =\ & D'(L +  k) + (k-1)(L + 1) + \frac{(k-1)(k-2)}{2}
\end{align*}
whereas the delay of the $k$-regionalized assignment is just $D' + k - 1 + L$. With $L >> D' >> k$, we find that the $k$-regionalized solution has delay $O(L)$, whereas the 1-regionalized solution has delay $\Omega(D'L)$. Moreover, $\Dem = \Theta(D')$, and hence the 1-regionalized solution has a factor $\Omega(\Dem)$ more delay. 
\end{proof}

\section{Regionalization for Specific Metrics}\label{sec:specific-metrics}

\subsection{Difficult Examples}

Solving for the optimal regionalization for any number of regions greater than 1 and less than $k$ appears difficult. 
It involves finding the optimal partition of the demands and FCs so that the overall delay among all regions is minimized. 

Therefore, we might naturally attempt to impose additional structure on the regions in order to make the search space of partitions more tractable. In particular, it is natural to require that the demands in a region satisfy a type of ``contiguity'' condition. Under a sufficiently restrictive condition, we might hope that finding the best way to assign FCs to regions to serve these demands could be efficiently computed. The simplest possible assignment would be the one that agrees with the global minimum cost assignment. Unfortunately, in this subsection, we demonstrate examples on simple metrics in which this assignment yields a solution that is non-optimal for total delay. 

First, we demonstrate this for line metrics. 
\begin{claim}
    There is an instance in the line metric such that, for a fixed partitioning of the demands, the optimal assignment of FCs to regions to minimize total delay is not given by the global assignment that minimizes the total cost.
\end{claim}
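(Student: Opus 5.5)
We prove the claim by exhibiting one explicit instance with two regions, unit capacities and unit demands, and checking it by hand. The idea is to make the global minimum cost assignment give one region two FCs lying on the \emph{same side} of its demand. Greedy fulfillment then forces a large backlog on the nearer FC. A slightly costlier pairing instead gives that region FCs on \emph{opposite sides}, which needs less backlog. Throughout we use \Cref{thm:equil=optimal} (and the minimum-delay characterization via $\mathcal{G}_x$) inside each region to compute its minimum equilibrium delay.

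\textbf{The instance.} Place FCs $f_3, f_1, f_2$ at positions $-2, 1, 3$, each with capacity $1$. Place two unit demands $d_1, d_2$ at position $0$ and one unit demand $d_3$ at position $-\tfrac14$. Fix the demand partition as region $A = \{d_1, d_2\}$ and region $B = \{d_3\}$. By capacity, region $A$ must receive two FCs and region $B$ one.

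\textbf{Step 1: the global minimum cost assignment.} We enumerate the three ways to choose which FC serves $d_3$.
\begin{itemize}
\item $d_3 \to f_3$ and $A \to \{f_1, f_2\}$ costs $1.75 + 1 + 3 = 5.75$.
\item $d_3 \to f_2$ and $A \to \{f_1, f_3\}$ costs $3.25 + 1 + 2 = 6.25$.
\item $d_3 \to f_1$ and $A \to \{f_2, f_3\}$ costs $1.25 + 3 + 2 = 6.25$.
\end{itemize}
Within region $A$ the cost does not depend on which of its two FCs serves which demand. So the unique global minimum cost assignment pairs $A$ with $\{f_1, f_2\}$ and $B$ with $\{f_3\}$.

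\textbf{Step 2: the minimum-delay equilibrium in each region.} A region with a single FC has zero backlog, so its delay equals the travel distance. For region $A$ with FCs at positions $p < q$ on the same side of $0$, both demands at $0$ prefer the nearer FC $p$. The minimum backlog supporting equilibrium is $\beta_p = q - p$, which gives each demand delay $q$. This follows from the path construction in $\mathcal{G}_x$: the only nonzero path runs $r \to q \to d \to p$. For FCs on opposite sides of $0$ the same reasoning gives each demand delay equal to the larger of the two distances.

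\textbf{Step 3: total delays.} We add the delays of the two regions for each pairing.
\begin{itemize}
\item The global pairing $A \leftrightarrow \{f_1, f_2\}$ gives delays $3 + 3$ in $A$ (backlog $2$ at $f_1$) plus $1.75$ in $B$, for a total of $7.75$.
\item The pairing $A \leftrightarrow \{f_1, f_3\}$ gives delays $2 + 2$ in $A$ (backlog $1$ at $f_1$) plus $3.25$ in $B$, for a total of $7.25$.
\item The pairing $A \leftrightarrow \{f_2, f_3\}$ gives delays $3 + 3$ in $A$ plus $1.25$ in $B$, for a total of $7.25$.
\end{itemize}
So for this fixed demand partition, the delay-optimal assignment of FCs to regions has delay $7.25 < 7.75$. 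That assignment differs from the global minimum cost assignment, which proves the claim.

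The main obstacle is choosing parameters so that two inequalities hold at once: the global cost ordering must favor the bad pairing, while the backlog-inflated delays must favor the other pairing. We found them by parametrizing FCs at $1$, $a$, $-b$ and $d_3$ at $c \in (-b, a)$. The cost condition reduces to $c < 0$ and the delay condition reduces to $a + 2c > b$. The values $a = 3$, $b = 2$, $c = -\tfrac14$ satisfy both.
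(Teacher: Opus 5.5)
Your counterexample is correct and follows essentially the paper's route. You use a three-FC, unit-capacity line instance with the left demand in a singleton region and the other two demands sharing a region, and compare the pairing induced by the unique global minimum cost assignment (minimum equilibrium delay $7.75$) with a costlier pairing (delay $7.25$), just as the paper's figure compares $8$ with $7$. Placing $d_1,d_2$ at the same point is harmless, since the strict gap of $\tfrac12$ survives moving $d_2$ to a nearby point $\varepsilon>0$; alternatively, you can treat them as one node with demand $2$.
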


\begin{proof}
See \Cref{fig:line-non-contig}. All demands and FCs have unit demand and capacity. The demands have been fixed into a partition placing the left demand node in its own region, and the right two demand nodes in a region. 
\begin{figure}[h!]
\begin{centering}
\includegraphics[scale=0.6]{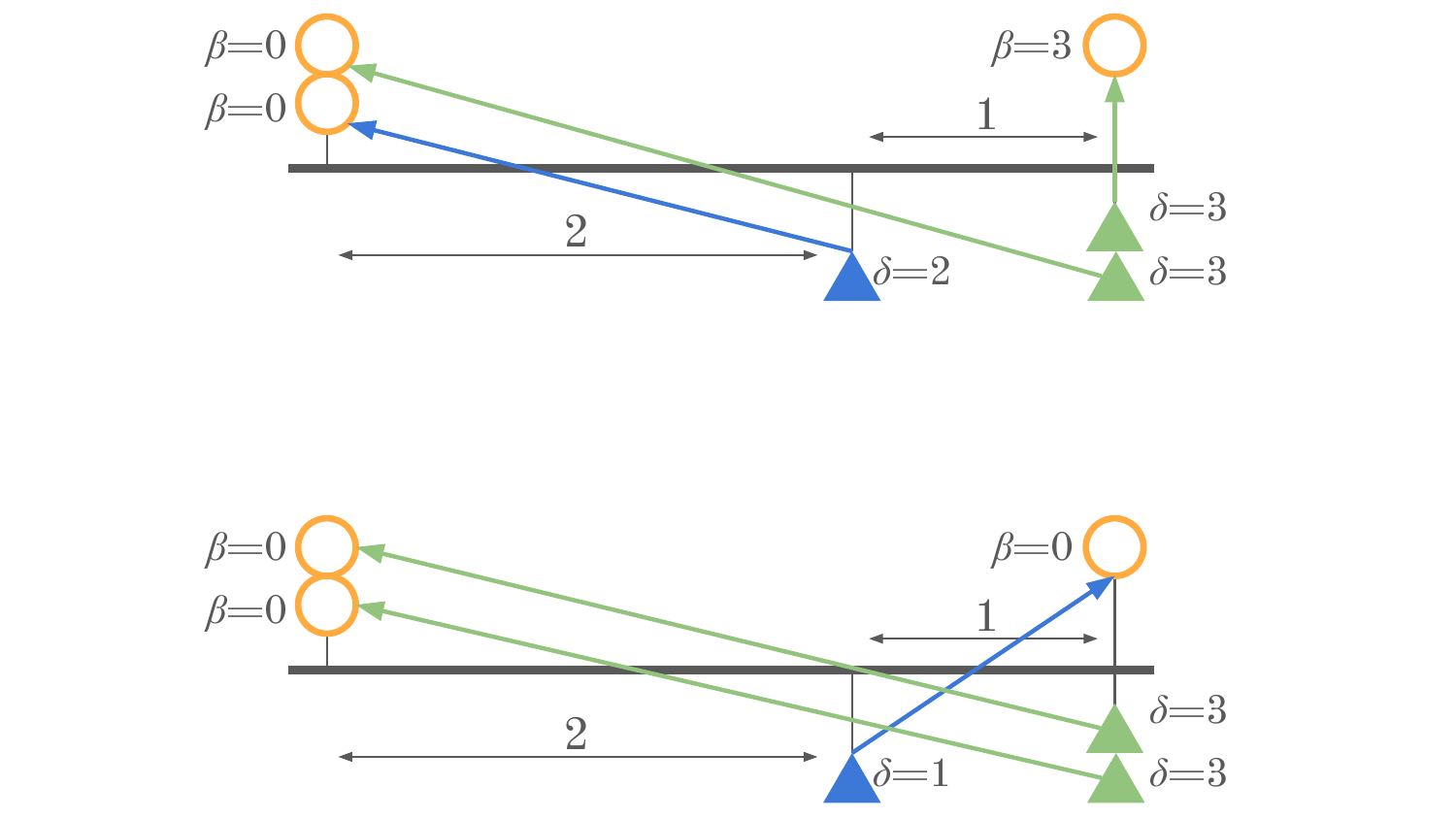} 
\caption{In both examples, the partitioning of the demands (triangles) into regions is fixed. The top example shows the regionalization resulting from assigning FCs (circles) based on the global minimum cost assignment. This results in a total delay of 8, while the bottom example shows the optimal regionalization of the FCs with a total delay of 7.}\label{fig:line-non-contig}
\end{centering}
\end{figure}
\end{proof}

Even though the above example shows that the global assignment is not useful for a fixed partition of the demands, note that if we are required to find the optimal delay solution with two regions, we could still repurpose the global assignment: we would split one of the two demands in the right in its own region and leave the other two in the second region, and use the global assignment. This alternate split of the demands results in a zero backlog solution that still uses the global assignment. An intriguing open question is whether for any value of the number of regions, we can always repurpose some optimal global assignment and partition the assigned pairs so that the resulting regionalization has minimum total delay in the line metric.

Next, we discuss tree metrics. Specifically, we give some examples that suggest that efficient computation of optimal regionalized solutions on general tree metrics may be hard. For tree metrics, we say that a regionalization is \emph{contiguous} if the trees induced by the set of demands in each region are edge-disjoint. 

First, we observe that an optimal regionalized solution might use a different assignment than the global minimum cost assignment. 

\begin{claim}
    There is an example in tree metrics in which the optimal contiguous 2-regionalization uses a different assignment than the unique optimal global assignment. 
\end{claim}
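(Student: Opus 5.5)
The claim is existential, so I will prove it by exhibiting a small explicit instance and checking it by hand.

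\textbf{Choice of tree.} I would use a star-like tree with unit demands and unit capacities. Contiguity has no force on a path with few demands, because regions there are just intervals. It also has no force on a singleton demand region, since such a region induces no edges. So I want four demands, and the natural contiguous $2$-regionalizations should pair them as $\{a,b\}$ and $\{c,d\}$, where $a,b$ hang off one side of the center and $c,d$ off the other.

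\textbf{Placement of FCs.} The plan is to make the unique minimum cost assignment (computed via \eqref{eq:primal}) cross between the two sides. Concretely, I would put one FC $f$ on a leaf whose unique nearest demand is $c$, but which lies on the $\{a,b\}$ side of the tree. Using the global assignment in a contiguous $2$-regionalization then has two possible outcomes:
\begin{enumerate}
\item some region has more demand than supply, so the regionalization is infeasible; or
\item the regions must be merged into larger ones whose demands' induced trees overlap, or whose internal equilibrium needs positive backlogs.
\end{enumerate}
The competing assignment should cost only slightly more globally, by some $\epsilon$. It should let each of $\{a,b\}$ and $\{c,d\}$ be served by FCs on its own side. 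Each region then has an equilibrium with zero backlog, so its total delay equals its own assignment cost.

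\textbf{Computations.} Three checks are needed.
\begin{enumerate}
\item Solve the global assignment problem and verify that it is unique. Uniqueness follows from a strict gap, obtained by choosing generic edge lengths.
\item Enumerate all contiguous $2$-partitions of the four demands, meaning the partitions whose demand subtrees are edge-disjoint. For each partition and each feasible split of the FCs that uses the global assignment, compute the minimum equilibrium delay. This uses the shortest-path characterization of backlogs on $\mathcal{G}_x$ from \Cref{sec:characterization}; by \Cref{lem:min-cost-LB} the delay can only exceed the regional assignment cost.
\item Compare these with the delay of the alternative regionalization, which is the global optimum plus $\epsilon$.
\end{enumerate}
Choosing lengths so that every global-assignment-consistent option incurs a backlog penalty, or a forced merge, of some $L \gg \epsilon$ finishes the claim.

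\textbf{Main obstacle.} The hard part is tuning the edge lengths so that two things hold at once. First, the crossing assignment must be the strict unique global optimum. Second, no clever contiguous partition, including an unbalanced one such as $\{a\}$ versus $\{b,c,d\}$, can salvage the global assignment cheaply. I would first try a center with four demand leaves at distance $1$, together with FCs attached near the center at slightly perturbed distances. I would then check the handful of contiguous partitions exhaustively, adjusting a single length parameter $L$ until the inequalities separate.
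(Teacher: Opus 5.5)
There is a genuine gap. The statement is existential, so its whole proof is a concrete instance with verified numbers. The paper simply exhibits a tree in a figure: there, the best contiguous $2$-regionalization using the global assignment has delay $4L$, while a contiguous one using another assignment has delay $2L+6\eps$. Your proposal never fixes an instance. It ends with ``I would first try \dots\ adjusting $L$ until the inequalities separate,'' so none of your three checks is actually carried out.

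More seriously, the mechanism you plan to tune cannot produce the separation:
\begin{itemize}
\item Contiguity constrains only the \emph{demand} sets (their induced subtrees must be edge-disjoint); FCs are partitioned freely. Under the global assignment $x^*$, each FC simply goes into the region of the demand it serves. So your outcome (1), a region short of supply, never occurs, and with exactly two regions nothing can be ``merged.''
\item Placing $f$ geographically on the $\{a,b\}$ side is therefore irrelevant. $f$ just sits in $c$'s region, and its location matters only through distances within that region.
\item Your concrete fallback is also ill-suited. In a star whose four leaves are the demands, complementary pairs use disjoint spokes, so every $2{+}2$ split is contiguous. Splits with a singleton are always contiguous. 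Contiguity thus has no force there at all, and with all distances near $1$ there is no scale separation to create a gap of order $L$.
\end{itemize}

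The lever you actually need is backlog. Suppose every contiguous split forces a demand $i$ that $x^*$ serves across a long edge into a region containing an FC $j$ that serves another demand and has $\ell_{ij}=O(\eps)$. Then $\beta_j\ge \ell_{ix^*(i)}-\ell_{ij}\approx L$. You must verify this for the singleton splits too.

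Your first, H-shaped picture works with this mechanism. Take unit demands and capacities, and build the tree as follows:
\begin{itemize}
\item join hubs $h_1,h_2$ by an edge of length $\eps$;
\item hang $a,b$ from $h_1$ and $c,d$ from $h_2$, and hang FCs $g_1,g_2$ from $b,d$, all by edges of length $\eps$;
\item hang FCs $F_1,F_2$ from $a,c$ by edges of length $L$.
\end{itemize}
Then $x^*=\{aF_1,bg_1,cF_2,dg_2\}$ is the unique optimum, with cost $2L+2\eps$; the next best costs $2L+6\eps$. The contiguous splits are $\{a,b\}\mid\{c,d\}$ and the four singleton splits. Under $x^*$, each of them places $a$ or $c$ in a region with $g_1$ or $g_2$ at distance at most $4\eps$, giving total delay at least $3L-\eps$. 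By contrast, take the split $\{a,b\}\mid\{c,d\}$ with FC sets $\{F_1,F_2\}\mid\{g_1,g_2\}$ and assignment $aF_1,bF_2,cg_1,dg_2$. This is an equilibrium with backlogs $\beta_{F_1}=\beta_{g_2}=\eps$ (the others zero) and delay $2L+10\eps$, which is smaller once $L>11\eps$.
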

\begin{proof}
See \Cref{fig:tree2-2-parts}. 
\begin{figure}[h!]
\begin{centering}
\includegraphics[scale=0.7]{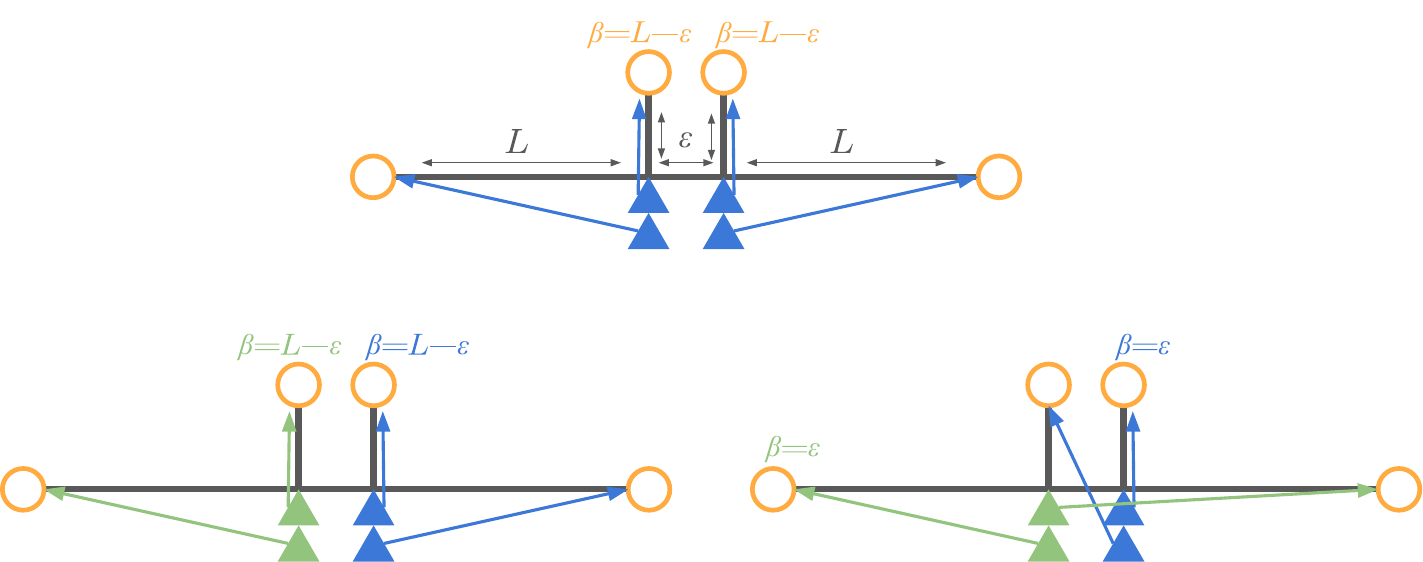} 
\caption{The optimal global assignment in this instance is shown on top. However, the best contiguous 2-regionalization using the same assignment (shown on the left) has delay $4L$, whereas there is a contiguous 2-regionalization using a different assignment (shown on the right) with delay only $2L+6\eps$.}\label{fig:tree2-2-parts}
\end{centering}
\end{figure}
\end{proof}

Finally, generalizing the example above, we observe that restricting to the use of the global minimum cost assignment can result in a regionalized solution that is worse than the optimal regionalized solution by a factor $\sqrt{k}$. 

\begin{claim}
    There is an example in tree metrics in which the optimal contiguous $\Theta(r)$-region assignment with connected subtrees costs a factor $r$ less than the optimal contiguous $r$-region assignment that uses the same assignment as the global assignment (where $r = \sqrt{k}$). 
\end{claim}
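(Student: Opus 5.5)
The approach is to generalize the gadget of the previous claim (\Cref{fig:tree2-2-parts}) by taking $r$ disjoint copies of it and hanging them off a common spine. In the $2$-region gadget, the unique global minimum cost assignment forces any contiguous regionalization that uses it to put a long edge of length $L$ inside some region. The equilibrium inside that region then needs backlog $\approx L$ at some FC. As in the proof of \Cref{thm:line-LB}, this backlog propagates along tight alternating paths and is paid by many demands. A slightly different assignment instead lets each region be served locally, at cost $O(L)$ per gadget. The plan is to scale this so that, with $k$ FCs and $r=\sqrt{k}$, each gadget contains about $r$ FCs and $r$ unit demands. The bad regionalization then has a backlog of order $L$ felt by $\Theta(r)$ demands in each gadget, while the good one pays only $O(L)$ per gadget.

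The concrete construction is as follows. Build a star-like tree with $r$ branches, each branch being a gadget. Each gadget is a path of $r$ unit-capacity FCs and $r$ unit demands at small spacing $\eps$. One long edge of length $L$ is arranged so that the unique minimum cost assignment matches demands ``across'' it in a shifted pattern (as in \Cref{fig:tree2-2-parts}), with total cost $O(L)$ per gadget plus $O(r\eps)$.

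The proof then has two bounds to establish.

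\textbf{Upper bound.} Exhibit a contiguous regionalization with $\Theta(r)$ regions (a constant number per gadget, i.e.\ $\Theta(r)$ total), using the alternative assignment that pairs each demand with a nearby FC on its own side of the long edge. Every region then needs zero backlog by \Cref{thm:equil=optimal}, since the local assignment is a minimum cost assignment within the region and the zero dual is optimal. The total delay is $O(rL)$.

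\textbf{Lower bound.} Show that any contiguous $r$-regionalization using the global assignment costs $\Omega(r^2 L)$. With only $r$ regions and $r$ gadgets, I would argue by counting that either some gadget lies wholly inside one region, or region boundaries cannot separate the cross-edge matched pairs because contiguity (edge-disjoint induced subtrees) forces a region containing a cross pair to contain the long edge. Inside such a region, I would use the tight-path argument from the proofs of \Cref{thm:dem-upper-bound} and the minimum-delay characterization. Concretely, exhibit an alternating path from each of $\Theta(r)$ demands through the crossing matched edge to a zero-backlog FC; telescoping shows that every minimum-delay equilibrium gives each such demand $\delta_i \ge L - O(r\eps)$. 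Summed over the gadget, this gives $\Omega(rL)$ per gadget. A pigeonhole argument over the $r$ regions then shows that $\Omega(r)$ gadgets are affected, giving $\Omega(r^2L)$. The ratio is $\Omega(r)$, as claimed.

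The main obstacle is the lower bound over \emph{all} contiguous $r$-regionalizations consistent with the global assignment. Regions may span several gadgets via the spine, so the pigeonhole step must account for how boundaries are placed. I would handle this by making spine edges long (length $\gg L$) so that merging gadgets is never beneficial. Failing that, I would make the gadget itself robust: design it so that even after splitting it into $r$ contiguous pieces, some piece must contain the long edge together with $\Omega(r)$ demands whose tight paths cross it. If the per-gadget robustness fails, the fallback is a single gadget with $k$ FCs arranged as $r$ nested shifted blocks, so that contiguity forces $\Omega(r)$ blocks, each with $\Omega(r)$ demands, to share long backlogged edges.
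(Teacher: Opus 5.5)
\textbf{The gap is the lower bound over all contiguous $r$-regionalizations that respect the global assignment.} This is the step you flag yourself. The star of independent gadgets cannot supply it, and your pigeonhole step is false for the construction you describe.

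A region consisting of a single demand induces a tree with no edges, so it is contiguous with respect to everything. Let each gadget have one matched pair across its long edge, which is what your global cost of $L+O(r\eps)$ per gadget amounts to. Now build the following regionalization:
\begin{itemize}
\item For each of gadgets $2,\dots,r$, put the crossing demand in a singleton region with its globally assigned far FC.
\item Put all remaining demands and FCs into one region.
\end{itemize}
This is a contiguous $r$-regionalization using the global assignment. In the big region, gadgets $2,\dots,r$ no longer contain a long matched pair, so by the argument of \Cref{thm:dem-upper-bound} their demands have delay $O(r\eps)$. Because spine edges exceed every delay, the union of the per-gadget equilibria is an equilibrium of the big region. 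Only gadget $1$ pays about $rL$.

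The total is $(2r-1)L+O(r^3\eps)=O(rL)$, the same order as your good solution, so the ratio is $O(1)$. Lengthening the spine works against you, since it is exactly what makes the catch-all region harmless. What is missing is an instance in which each far-matched demand is $\eps$-close to FCs feeding chains that cannot be split away from it using only $r$ contiguous regions. Your fallbacks gesture at this, but they supply neither the instance nor the counting argument.

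The paper's proof is just the explicit instance of \Cref{fig:trees-r-parts}. Under the natural contiguous regionalization with the global assignment, every FC at distance $\eps$ from its assigned demand is forced to backlog $L-\eps$, so every demand pays $L$ ($9L$ for $r=3$). A different assignment of FCs to the same contiguous regions pays $L$ per region plus $\eps$-terms ($3L+24\eps$).

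\textbf{Three further steps need repair.}
\begin{itemize}
\item Your upper bound infers zero backlogs from the local assignment being minimum cost within each region. That does not make the zero dual optimal: \Cref{thm:line-LB} is a minimum cost assignment that needs large backlogs. Zero backlogs give an equilibrium exactly when every demand is matched to a nearest FC of its region, and you must verify this.
\item Since the global assignment crosses the long edge, the two sides of a gadget are unbalanced, so you cannot pair every demand with an FC on its own side. By \Cref{lem:min-cost-LB}, your good solution still pays about $L$ per gadget. The real design task is choosing each region's FC set so that its far-matched demand has no closer FC available.
\item For $\delta_i\ge L-O(r\eps)$, the right tool is the residual-path inequality from the proof that shortest paths in $\mathcal{G}_x$ yield the minimum-delay equilibrium. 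The path should start at the FC of the long matched pair and use only $\beta\ge 0$ there. A tight path from $i$ to a zero-backlog FC is the upper-bound device of \Cref{thm:dem-upper-bound}, not a lower bound.
\end{itemize}
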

\begin{proof}
It is straightforward to generalize the example from \Cref{fig:tree2-2-parts} to $r$ regions. See \Cref{fig:trees-r-parts} for the instance with 3 regions. In the global assignment, the FCs at distance $\eps$ from their assigned demands must have backlog $\beta = L-\eps$, and hence the contiguous 3-regionalization using the global assignment has delay $9L$, whereas the other assignment shown for the contiguous regionalization has delay $3L + 24\eps$.
\begin{figure}[h!]
\begin{centering}
\includegraphics[scale=0.7]{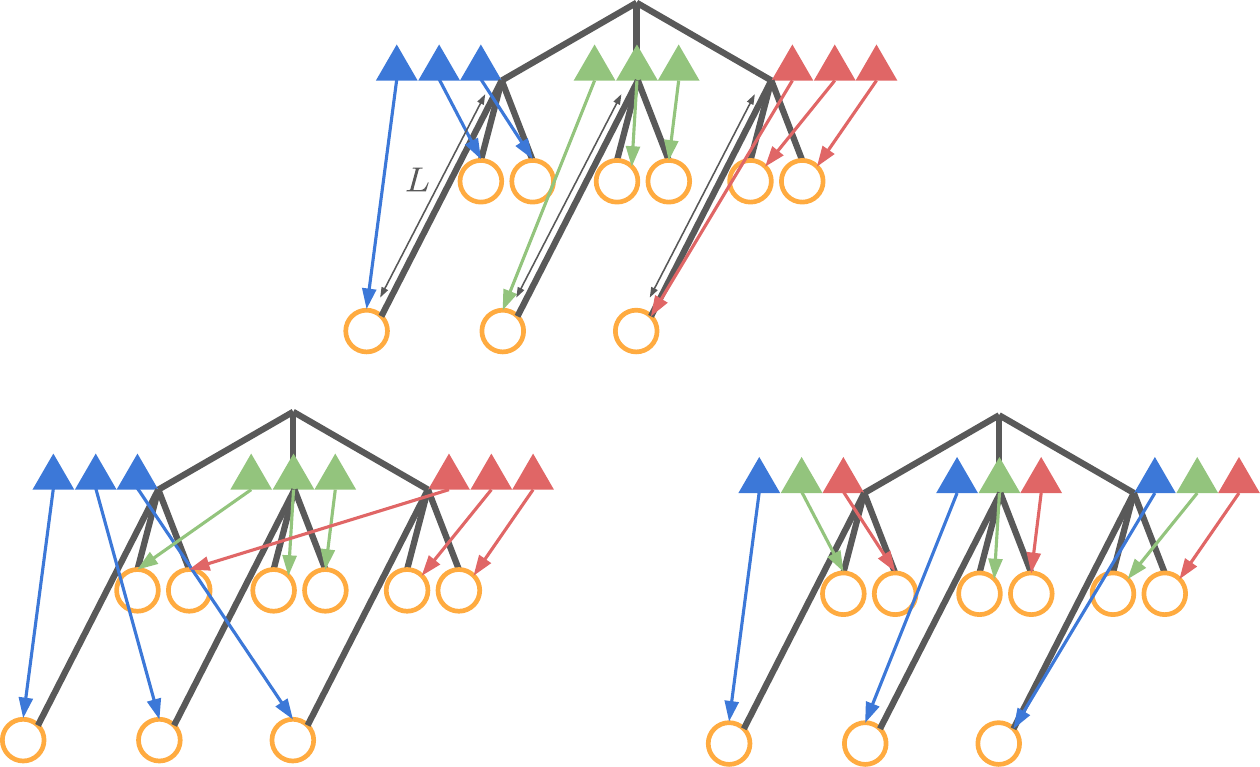} 
\caption{In this instance, the global solution is shown on top, with the natural contiguous regionalization. All distances are $\eps$ except the ones labeled as $L$. The top solution has total delay $9L$. Below on the left, an alternate assignment using the same contiguous regionalization is shown with delay $3L + 24\eps$. On the right, an alternate \emph{non-contiguous} regionalization is shown that still uses the global assignment, but has delay $3L + 6\eps$.}\label{fig:trees-r-parts}
\end{centering}
\end{figure}\end{proof}

It is worth noting that \Cref{fig:trees-r-parts} also demonstrates that there are instances such that in order to achieve low delay with the global assignment, it is necessary to use a \emph{noncontiguous} regionalization. In this instance, we get delay just $3L + 6\eps$ for the noncontiguous regionalization and global assignment shown. 

\subsection{Approximation for Euclidean Metrics}\label{sec:euclidean}
Finding optimal regionalizations appears to be difficult, even for simple metrics, and enforcing that we use the assignment from the global minimum cost assignment sacrifices significantly on the delay of the solution when we restrict to contiguous regionalizations. Therefore, we turn to regionalizations that \emph{approximate} the delay of the optimal solution, without insisting on any additional restrictions on the regions (such as contiguity). 

We first focus on instances where the metric comes from a Euclidean line. In the previous section, we described such an instance in which the optimal $k$-regionalized solution has $\Theta(\Dem)$-factor less delay than the optimal 1-regionalized solution. While this might seem to suggest that high levels of regionalization are necessary in order to achieve low delay, here we show that this is not necessarily the case. Specifically, we show that with just logarithmically (in the aspect ratio $\rho$ of the line metric) many regions, one can achieve a delay that is just a constant factor greater than that of the optimal $k$-regionalized solution. For these results, we assume that the demands and capacities are all unit $D_i = C_j = 1$.

\constapproxeuclidean*

The idea is to construct regions as well-separated clusters. By keeping the clusters that make up a single region far away from each other, we prevent the accumulation of backlogs caused by a long chain of FCs and demands that is present in the example from \Cref{thm:line-LB}. To make this more concrete, we split the demands/FCs into buckets based on the distance scale at which they are assigned in the minimum cost assignment. Then for each distance scale, we create three regions, each composed of a collection of segments of the line whose length is equal to the distance scale, and each separated by a length of twice the distance scale. The formal proof is deferred to Appendix~\ref{app:euc-proofs}.

Since the cost of the minimum cost assignment is a lower bound on the delay of the minimum-delay regionalized assignment by \Cref{thm:k-regions=min-cost}, this implies a constant approximation to the best regionalized assignment.

\begin{corollary}
    There is a factor-6 approximation for the minimum-delay regionalized solution in the line metric using $3\log_2 \rho$ regions.
\end{corollary}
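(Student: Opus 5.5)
The corollary follows from Theorem~\ref{thm:const-approx-euclidean} combined with Lemma~\ref{lem:min-cost-LB}, so the plan is to compare the constructed regionalization against the lower bound supplied by the minimum cost assignment.

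First I fix an instance on the line with unit demands and capacities. Let $\OPT$ denote the cost of a minimum cost assignment, i.e.\ the optimum of \eqref{eq:primal}. By Lemma~\ref{lem:min-cost-LB}, every regionalized solution has delay at least $\OPT$. This holds whatever the number of regions, since each unit of demand pays at least its travel time to the FC serving it, and the overall assignment is feasible for \eqref{eq:primal}. So the minimum delay over all regionalized solutions, and in particular over those with at most $3\log_2\rho$ regions, is at least $\OPT$.

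Second, I invoke Theorem~\ref{thm:const-approx-euclidean}. It gives a regionalized solution with $3\log_2\rho$ regions whose total delay is at most $6\,\OPT$. Combining this with the lower bound, the delay of this solution is at most $6$ times that of the optimal regionalized solution, whether the optimum is taken with the same number of regions or with arbitrarily many (up to $k$, where Theorem~\ref{thm:k-regions=min-cost} shows the bound $\OPT$ is attained).

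Third, I check that the construction is algorithmic, so that it really is an approximation rather than a pure existence statement. The steps are as follows. Compute an integral minimum cost assignment, which can be done in polynomial time via \eqref{eq:primal}, or on the line by sorted matching. Bucket the assigned pairs by distance scale. Form the three interleaved families of segments per scale. Finally, within each region, compute a minimum-delay equilibrium using Corollary~\ref{thm:efficient-min-delay}. Using the minimum-delay equilibrium within each region can only decrease delay relative to the equilibrium analyzed in the theorem, so the factor $6$ still holds.

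The only potential subtlety is exactly this last point: Theorem~\ref{thm:const-approx-euclidean} is an existence statement. I would verify that its construction depends only on a minimum cost assignment and the distance scales, both of which are efficiently computable. That settles the constructive version.
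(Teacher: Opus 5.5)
Your proposal is correct and follows the paper's argument: the $6\,\OPT$ upper bound from Theorem~\ref{thm:const-approx-euclidean} is combined with the lower bound $\OPT$ on the delay of any regionalized solution. You cite Lemma~\ref{lem:min-cost-LB} for that lower bound, which is exactly what is needed, even though the paper's one-line justification points to Theorem~\ref{thm:k-regions=min-cost}. Your algorithmic remark is a harmless extra, and no monotonicity step is needed there because the theorem's proof already analyzes a minimum-delay equilibrium within each region.
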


We can get a similar guarantee for the Euclidean plane, or even the $q$-dimensional Euclidean space as follows:
\begin{corollary}\label{cor:q-dim-euc}
    $(2 + \sqrt{q})^q \log \rho$ regions are sufficient to get a $(4\sqrt{q} + 2)$-approximation for the optimal delay solution on instances in $q$-dimensional Euclidean space.
\end{corollary}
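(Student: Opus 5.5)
We generalize the line argument to $\mathbb{R}^q$ (unit demands and capacities). Fix an integral minimum cost assignment $x$ and pair each demand $i$ with its FC $j(i)$. Normalize distances so the smallest positive assignment distance is $1$. Then the nonzero pair lengths lie in $[1,\rho]$, and we bucket the pairs by scale: pair $(i,j(i))$ goes into bucket $t$ if $\ell_{ij(i)}\in(2^{t-1},2^t]$. Zero-length pairs go to bucket $0$, where they are served with zero delay. This gives $O(\log\rho)$ buckets. Each bucket is then split into regions so that, inside every region, the equilibrium delay of each demand is at most $(4\sqrt q+2)$ times its own assignment length. Summing over demands gives delay at most $(4\sqrt q+2)\cdot\cost(x)$. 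By \Cref{lem:min-cost-LB}, $\cost(x)$ is a lower bound on the delay of any regionalized solution, so this is the claimed approximation.

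\textbf{Construction within a bucket.} Fix bucket $t$ and set $s=2^t$. Tile $\mathbb{R}^q$ by axis-parallel cubes of side $s$. Color the cubes by their integer coordinates modulo $m$, where $m=\lceil 2+\sqrt q\,\rceil$ (or whatever spacing the separation argument below needs). This gives $m^q\approx(2+\sqrt q)^q$ color classes. Two distinct cubes of the same color are then separated by at least $(m-1)s\ge(1+\sqrt q)s$ in some coordinate. Each pair $(i,j(i))$ in the bucket is placed in the cube containing its demand $i$. The region for color $c$ consists of all pairs whose demand lies in a cube of color $c$, together with their FCs. Each region has supply equal to demand, since it is a union of matched pairs. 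Over all buckets the number of regions is $(2+\sqrt q)^q\log\rho$.

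\textbf{Bounding delays.} Within a region the assignment $x$ restricted to it is still a minimum cost assignment for that sub-instance, since it is a restriction of an optimal assignment. So by \Cref{thm:equil=optimal} and the argument of \Cref{thm:dem-upper-bound}, the minimum-delay equilibrium exists. In it, every demand $i$ has $\delta_i$ bounded by the alternating sum along a tight path to a zero-backlog FC, and from the shortest-path characterization, $\delta_i$ equals the minimum weight of an alternating path in the residual graph $\mathcal{G}_x$ restricted to the region. We bound this using only paths that stay inside the cluster of $i$'s cube.

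The key claim is that the equilibrium decouples across clusters. For a demand $i'$ in one cube and an FC $j$ in a different same-colored cube, we have $\ell_{i'j}\ge(1+\sqrt q)s-s$, since $j$ lies within $s$ of its own demand. Any alternating path that leaves a cube therefore pays a large positive edge. So the minimum-delay backlogs can be computed cluster by cluster: an FC in another cluster is never strictly preferred, or at least never lowers $\delta_i$.

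Within one cluster, take any alternating path. Each assigned edge has length at most $s$, and each unassigned edge has length at least $0$. The demands in the cluster lie in a cube of diameter $\sqrt q\,s$, and the FCs lie within $s$ of it. I would bound $\delta_i$ by an explicit short path, such as the direct edge from $i$ to any zero-backlog FC in the cluster. Combined with complementary slackness and the fact that $\beta_j\le \ell_{i'j'}+\beta_{j'}-\ell_{i'j}$ telescopes to a bounded quantity, this should give $\delta_i\le (2\sqrt q+1)\cdot 2s/2\ldots$. Since $s<2\ell_{ij(i)}$, this yields $\delta_i\le(4\sqrt q+2)\ell_{ij(i)}$.

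\textbf{Main obstacle.} The hard step is showing that backlogs stay $O(\sqrt q\,s)$ inside a cluster. Long alternating chains are what blew up \Cref{thm:line-LB}, and within one cube there could be many pairs. I would argue via the potential form: in the minimum-delay equilibrium each tight component contains an FC with $\beta=0$. For any FC $j$ in the cluster with $\beta_j>0$, $j$ is fully used, and its assigned demand $i'$ has $\delta_{i'}=\ell_{i'j}+\beta_j$. But the constraint to the zero-backlog FC $j_0$ of the same component gives $\delta_{i'}\le \ell_{i'j_0}$. Provided $j_0$ lies in the same cluster (the decoupling claim), this is at most $(\sqrt q+1)s$ in distance. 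Hence $\beta_j\le(\sqrt q+1)s$ and $\delta_i\le \ell_{ij(i)}+\beta_{j(i)}\le s+(\sqrt q+1)s$, which is within the target constant.

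Making the decoupling rigorous is the delicate part: we must show that the minimum-delay duals never have a component whose only zero-backlog FC lies in a far cluster. I would handle this by constructing the duals explicitly cluster by cluster and verifying that the cross-cluster dual constraints hold because of the large separation.
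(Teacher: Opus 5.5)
\textbf{There is a genuine gap.} It is in the step you flag as delicate, and with your parameters that step is not just unproven but false.

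Your cubes have side $s=2^t$, equal to the largest assignment length in the bucket, and period $m=\lceil 2+\sqrt q\rceil$. That only gives $\ell_{i'j}\ge \sqrt q\,s$ across clusters. Delays inside a cluster, however, can come arbitrarily close to your own bound $(\sqrt q+1)s$. So cross-cluster dual constraints become tight, and backlog passes from cluster to cluster, gaining nearly $s$ per cluster. This is exactly the chain phenomenon of \Cref{thm:line-LB}.

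Concretely, take $q=1$, $s=1$, $m=3$, fix $0<\epsilon<1/2$, and for $k=0,\dots,K$ place demands $c_k=3k$, $b_k=3k+1-\epsilon$ and FCs $h_k=3k+1$, $e_k=3k+2-\epsilon$. The sorted matching $c_k\to h_k$, $b_k\to e_k$ is a minimum cost assignment, and all its lengths equal $1$, so everything is in one bucket. All demands lie in the cubes $[3k,3k+1)$, which share a colour, so everything is one region. Complementary slackness and dual feasibility force
$\beta_{h_k}\ge \delta_{b_k}-\epsilon=\beta_{e_k}+1-\epsilon$ and $\beta_{e_{k-1}}\ge \delta_{c_k}-(1+\epsilon)=\beta_{h_k}-\epsilon$.
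Hence $\beta_{e_{K-r}}\ge r(1-2\epsilon)$ in \emph{every} equilibrium of the region. The delay is therefore $\Omega(K^2)$, against an assignment cost of $2(K+1)$. Your hedge ``or whatever spacing the separation argument needs'' thus hides the entire quantitative content, and the spacing you wrote down fails at exactly the region count the statement claims.

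\textbf{The fix.} You need $\mathrm{sep}-s\ge \mathrm{diam}+s$, where $\mathrm{sep}$ is the gap between same-coloured cubes and $\mathrm{diam}$ is the cube diameter. The paper gets this by using cubes of side $2^{d+1}$, i.e.\ twice the largest assignment length $s$, with period $B=2+\lceil\sqrt q\rceil$. Then $\mathrm{sep}\ge 2(1+\sqrt q)s=\mathrm{diam}+2s$.

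With that change, your explicit cluster-by-cluster construction goes through, and it is a clean alternative to the paper's argument:
\begin{itemize}
\item Each cluster is a union of matched pairs, so its minimum-delay dual has a zero-backlog FC $j_0$ by the tight-component argument of \Cref{thm:dem-upper-bound}. This gives $\delta_i\le \ell_{ij_0}\le 2\sqrt q\,s+s$.
\item Any FC of another same-coloured cluster is at distance at least $2(1+\sqrt q)s-s=2\sqrt q\,s+s\ge\delta_i$. So all cross-cluster constraints hold with $\beta\ge 0$.
\item The concatenated duals are feasible and complementary slack with $x$, hence an equilibrium of the region by \Cref{thm:equil=optimal}. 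Any equilibrium suffices for a regionalized solution.
\item Finally, $s<2\ell_{ij(i)}$ yields the factor $4\sqrt q+2$.
\end{itemize}

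The paper argues differently. It works inside the region's minimum-delay equilibrium and proceeds by contradiction: the segment lacking a zero-backlog FC whose smallest backlog is largest must have a tight edge to another segment. That argument relies on the strict inequality coming from half-open scale buckets, whereas your construction needs only the non-strict separation.

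If you keep cube side $s$ instead, you would need $m\ge 3+\sqrt q$. That gives the better factor $2\sqrt q+2$, but uses more regions than the statement allows.
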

The proof can be found in Appendix~\ref{app:euc-proofs}.

\section{Experimental Results}\label{sec:experiments}

To investigate our findings experimentally, we created an instance mimicking a US retailer. We use US census data on ZIP code populations to create a set of demand nodes corresponding to ZIP3 cells, with each node located at the centroid of the cell and having demand proportional to the population of the cell. We used publicly available data on Amazon Fulfillment Center locations as our set of FCs. This was done using a large language model with a prompt requesting a list of the top 80 largest Amazon FCs in the contiguous United States, along with their latitude/longitude coordinates. 

\subsection{FC Capacities}

In order to determine reasonable FC capacities, we combined two heuristically derived settings. In one, all FCs are assumed to have equal capacity such that there is enough total capacity in the instance to meet all demand. In the other setting, FCs are assumed to have enough capacity to meet the demand of all those demand nodes for whom that FC is the closest. We call these ``Voronoi'' capacities, since this is equivalent to giving an FC capacity sufficient to meet all of the demand in its Voronoi cell. In this case, we gave the FCs with the least capacity some extra capacity in order to make the total capacities in both settings equal. 

We then considered instances created by taking convex combinations of these two capacity vectors. If we denote by $C^{(e)}$ the equal capacity vector and $C^{(v)}$ the Voronoi capacity vector, we study instances with capacity vectors of the form 
\[
    \alpha C^{(v)} + (1 - \alpha) C^{(e)}
\]

An initial observation is that, with Voronoi capacities, the minimum cost assignment will be precisely the assignment described by the Voronoi cells, and such an assignment is an equilibrium solution with all zero backlogs. On the other hand, the equal capacity instance may need backlogs in an equilibrium assignment, and this was the case in our instance. Figures~\ref{fig:v-equil} and ~\ref{fig:e-equil} depict the minimum-delay equilibrium solutions for the cases with Voronoi and equal capacities, respectively.

\begin{figure}[h!]
\begin{centering}
\includegraphics[scale=0.35]{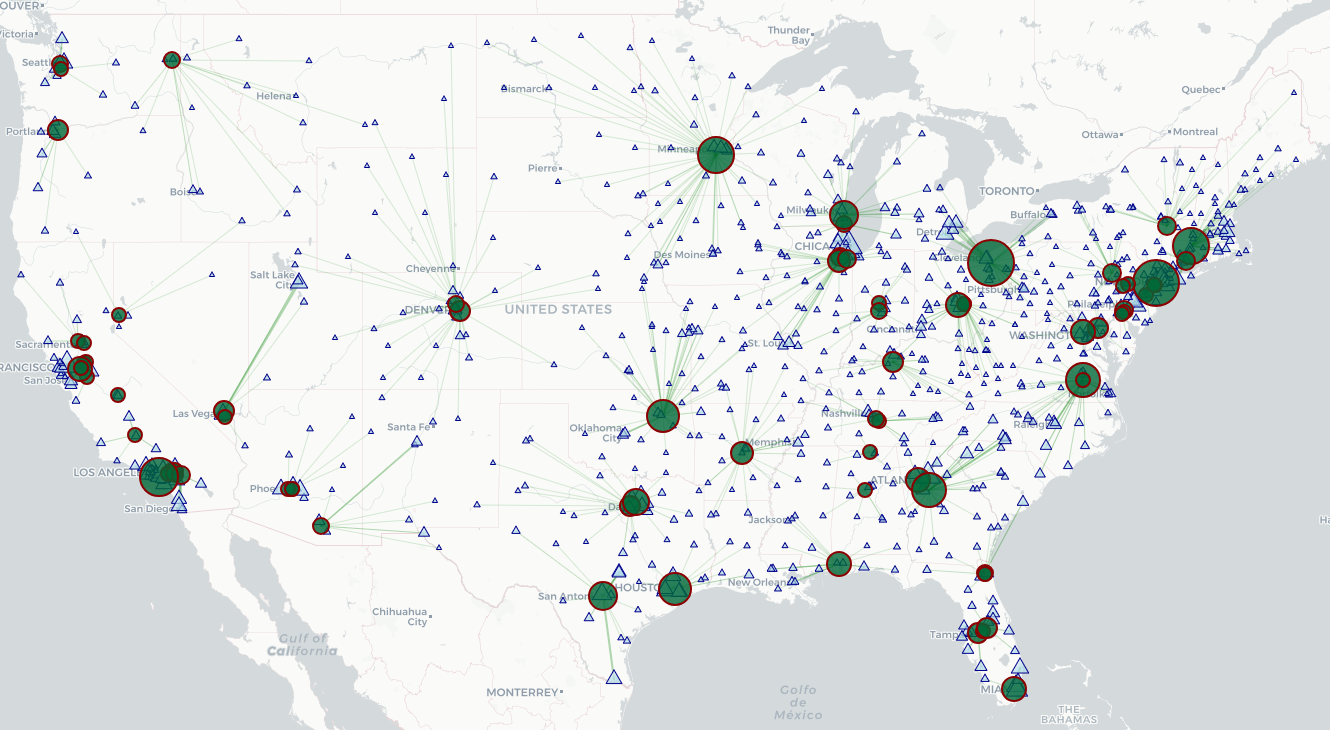} 
\caption{A minimum-delay equilibrium assignment for the Voronoi capacity instance. Triangles represent demands, and circles FCs, with size determined by demand rate/capacity. All FCs have zero backlog.}\label{fig:v-equil}
\end{centering}
\end{figure}

\begin{figure}[h!]
\begin{centering}
\includegraphics[scale=0.35]{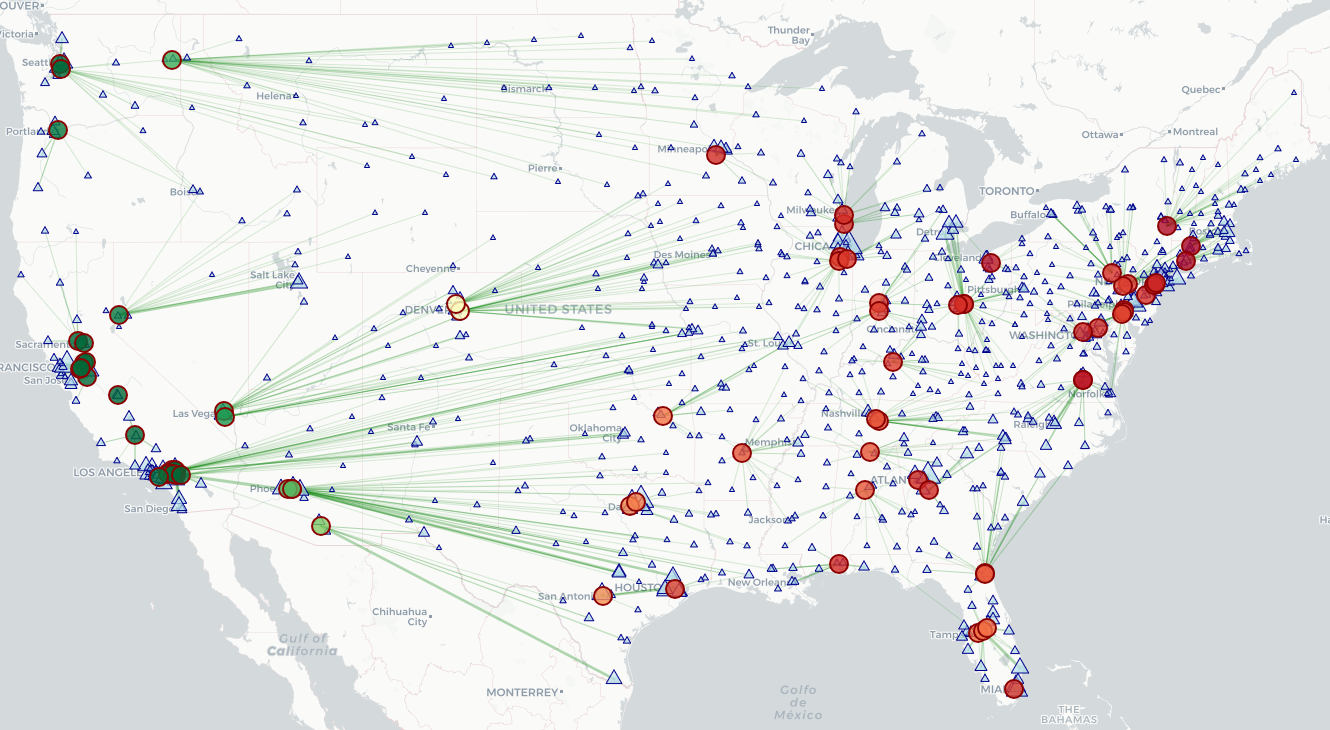} 
\caption{A minimum-delay equilibrium assignment for the equal capacity instance. Triangles represent demands, and circles FCs. Green FCs indicate low backlog, and red indicates high backlog.}\label{fig:e-equil}
\end{centering}
\end{figure}

By taking convex combinations of these capacity vectors and solving for the minimum-delay equilibrium assignment in each instance, we observed a relatively smooth trend of the overall delay decreasing as $\alpha$ increases---see Figure~\ref{fig:delay-vs-alpha}. 

\begin{figure}[h!]
\begin{centering}
\includegraphics[scale=0.5]{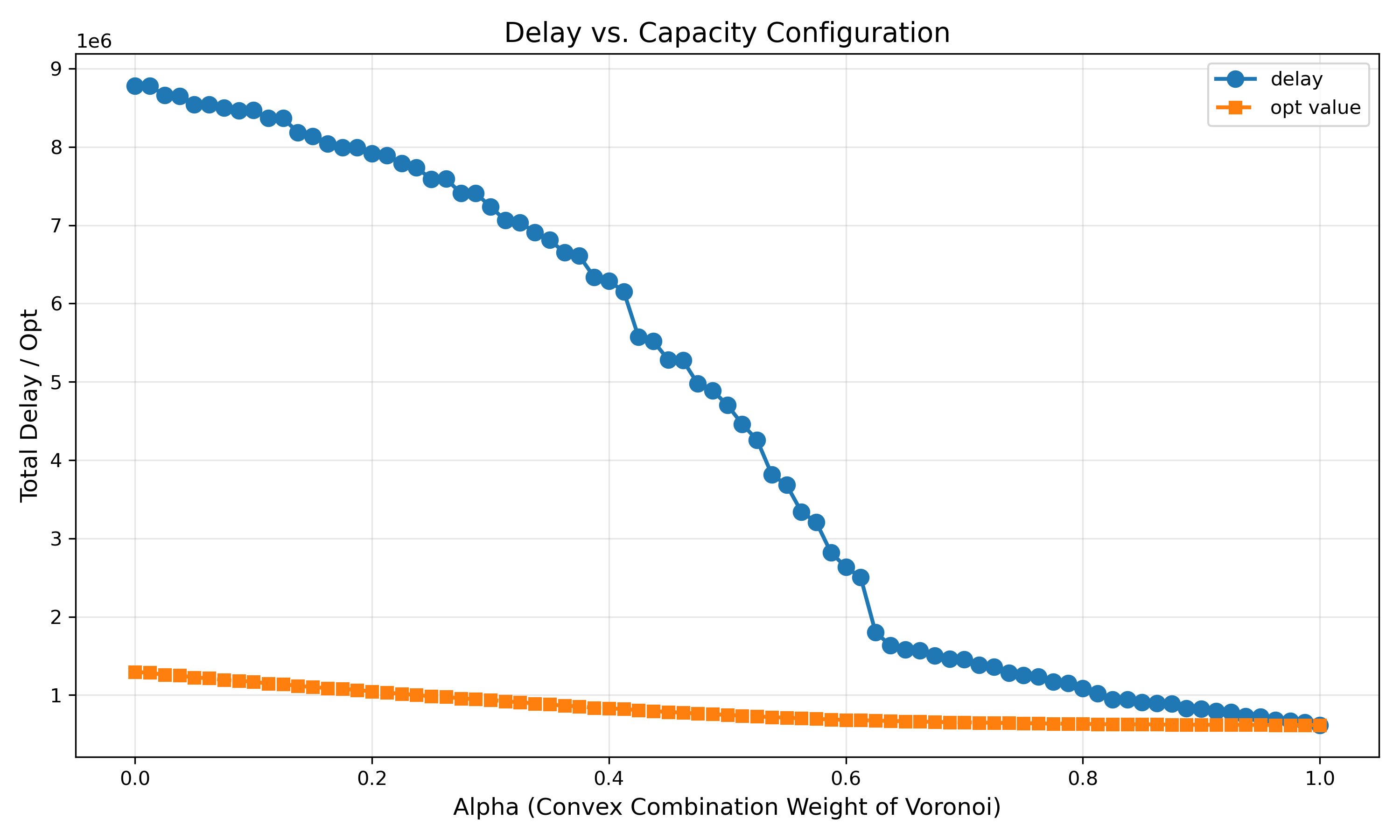} 
\caption{Delay versus $\alpha$. The orange line shows the cost of the minimum cost assignment for each instance. }\label{fig:delay-vs-alpha}
\end{centering}
\end{figure}

\subsection{Regionalization}

Next, we investigate how the overall delay is affected by simple regionalization. 
We used the capacity vector corresponding to $\alpha = 0.6$ in this example, and created a regionalization by hand that mimics an agglomeration of the regions used by Amazon from \cite{sinha2026regionalize}. The global equilibrium solution had an overall delay of 2,634,363 (measured in kilometers) and is depicted in Figure~\ref{fig:global-equil}, whereas the regionalized solution depicted in Figure~\ref{fig:regionalized-equil} had a lower delay of 2,107,233, an improvement of approximately 20\%. 

\begin{figure}[h!]
\begin{centering}
\includegraphics[scale=0.35]{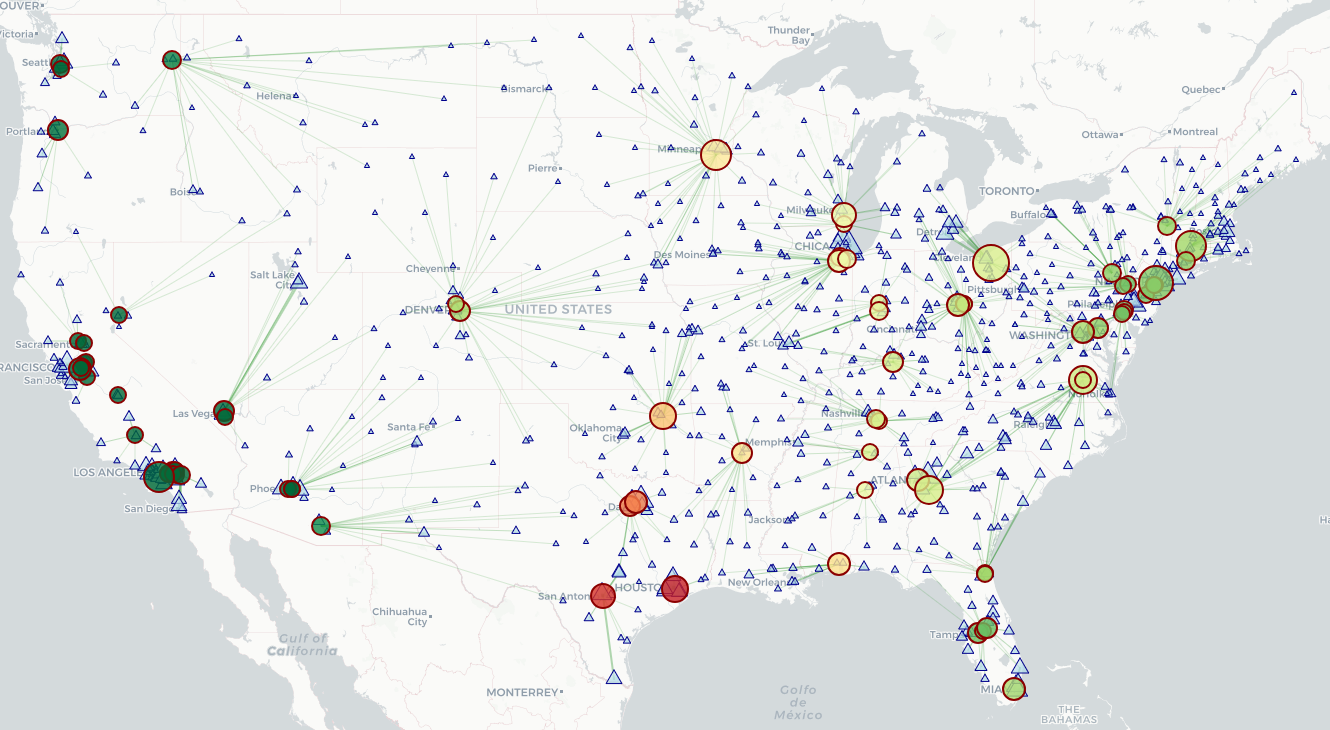} 
\caption{The global equilibrium solution with delay 2,634,363. Triangles represent demands, and circles FCs, with size determined by demand rate/capacity. FC capacities are chosen corresponding to $\alpha=0.6$. Green FCs indicate low backlog, and red indicates high backlog.}\label{fig:global-equil}
\end{centering}
\end{figure}

\begin{figure}[h!]
\begin{centering}
\includegraphics[scale=0.35]{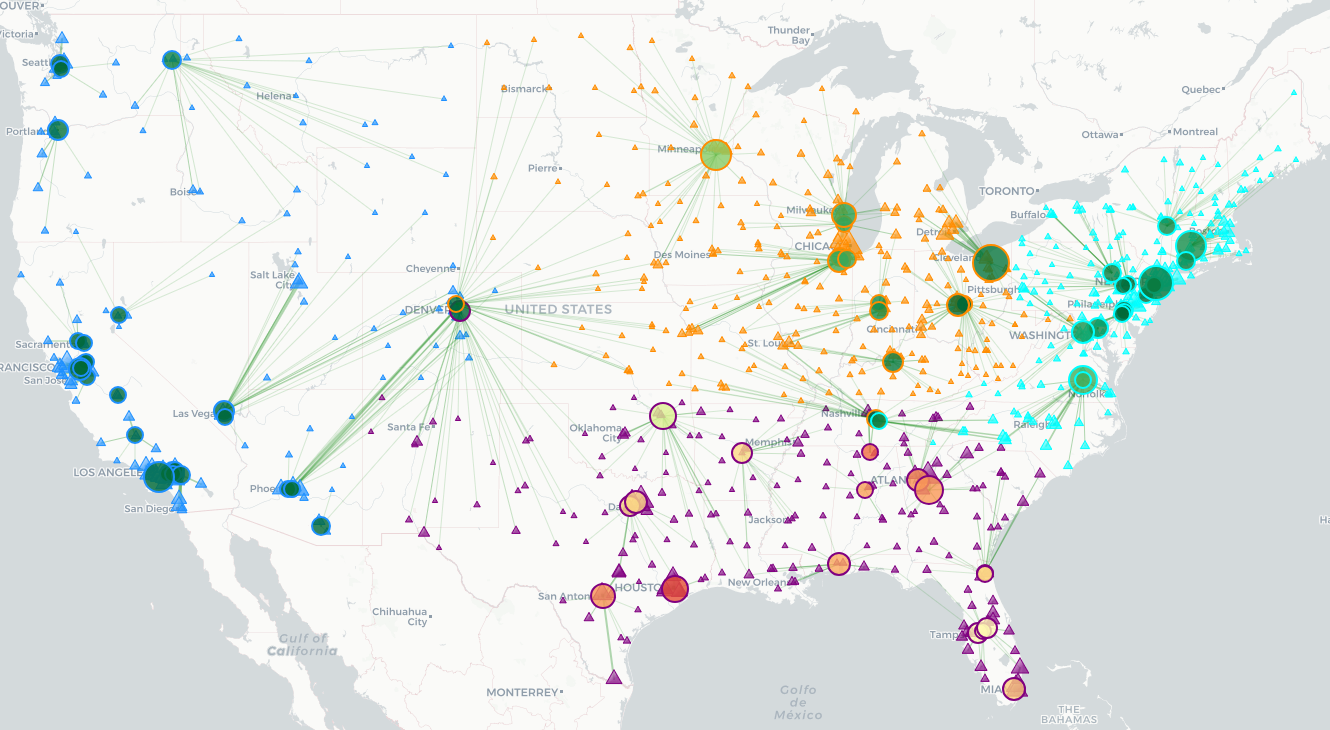} 
\caption{The Amazon-like regional solution with delay 2,107,233. Regions are represented by colors of the triangles (demands) and border colors of the circles (FCs).}\label{fig:regionalized-equil}
\end{centering}
\end{figure}

\section{Conclusion}
We have shown a simple model of equilibrium backlogs and delays in a greedy fulfillment situation modeling a commonly used e-retailing policy. We used this model to show how the total fulfillment delay can be substantially reduced by regionalizing the fulfillment network. We then show how network decomposition based methods can be used to find a small logarithmic number of regions in Euclidean metrics with total delay of the same order as the minimum travel time of any solution. Finally, we showed empirically that the model predicts a significant reduction in total fulfillment delay on a nationwide US e-retailer network, in line with the results reported from the regionalization effort. 

Many interesting questions remain.
Is the problem of optimal regionalization to minimize total delay for a given specified number of regions NP-hard in general metrics? On the other side, are there simple solutions to this problem on more structured metrics such as one or two dimensional Euclidean distances? Are the latter problems more tractable if the set of allowed regionalizations are required to be contiguous in the sense that we have used in this paper, or under other regularity conditions (such as the demand regions forming a convex set)? More specifically, we have not even been able to resolve the question of whether in a line metric, for any number of regions, there is an optimal regionalized solution for minimum total delay
that obeys the global assignment of a minimum-cost solution. Such an assignment is given, without loss of generality, by assigning the demands in the left-to-right order to the FCs in the left-to-right order filling the FCs up to their capacity.
Our work used a static model of backlogs to argue equilibrium properties. However, the dynamical system is exposed to variation in both the demand rates (due to seasonality) and supply rates (due to weather and other network maintenance events). Are the regional solutions arising from solving the static backlog problem robust to such variations? Are there any unintended effects in using such solutions for the total delay of the dynamical systems?

From a practical perspective, there are several additional dimensions to consider, some of which are noted in \cite{sinha2026regionalize}. For example, this paper models a single sku, but in reality there are a large number of skus with different demand distributions. The linear transportation cost in this paper is also an approximation; in reality, for example, the requirement to use integer numbers of vehicles for transportation brings additional computational complexity. A third aspect to consider is the ability to increase capacity at a cost (which may be non-linear); for example, by temporarily activating some inefficient process to get a small amount of additional capacity. Each of these aspects offers additional research questions.

\bibliographystyle{alpha}
\bibliography{references}

\begin{appendices}
\appendix

\section{Omitted Proofs from \Cref{sec:euclidean}}
\label{app:euc-proofs}

\begin{proof}[Proof of \Cref{thm:const-approx-euclidean}]
Let $\rho$ be the aspect ratio of the metric $\ell$ (the maximum distance divided by the minimum distance). Assume WLOG that the minimum distance is 1 by scaling, so $\rho$ is the max distance. 

Now consider a minimum cost assignment. Each demand $i$ is matched to some FC at some distance. We will bucket the demands into $D^1, D^2, \dots, D^{\log \rho}$, where a demand $i$ belongs to bucket $D^d$ if it is matched to an FC at a distance in $[2^{d-1}, 2^{d})$. 

Fix a bucket $D^d$, and split the line into contiguous segments of length $2^{d + 1}$, call them $L_1, L_2, \dots$. We define three regions $R_d^{(1)}, R_d^{(2)}, R_d^{(3)}$, where $R_d^{(\kappa)}$ is all those demands from $D^d$ that fall into a segment $L_s$ for $s \equiv \kappa \pmod{3}$, along with the FCs they are assigned to in the minimum cost assignment.

We claim that for any $\kappa$, a minimum-delay assignment of region $R_d^{(\kappa)}$ will have a demand in each segment whose assigned FC has $\beta = 0$. Suppose not and consider the segment with no $\beta = 0$ FC whose minimum $\beta$ FC is largest among all such segments. Since this is a minimum-delay assignment, there must be some tight constraint between an FC $j$ assigned to a demand in this segment and a demand $i$ in another segment, otherwise we could uniformly shift the dual values for all FC/demand pairs in this segment downward without violating any constraint, contradicting the fact that this is a minimum-delay assignment. 

But now the tight constraint implies that 
\begin{align}
    \beta_j &= \delta_i - \ell_{ij}\nonumber \\
    & < \delta_i - (2 \cdot 2^{d + 1} - 2^d)\nonumber \\
    &= \delta_i - 3 \cdot 2^d. \label{eq:beta-bound}
\end{align}
The inequality follows because these two segments are separated by a distance of at least the width of two segments and the distance scale is $[2^{d-1}, 2^d)$. Moreover, since we started with the segment maximizing the minimum $\beta$ and $i$ lies in another segment, then $i$ must be within distance $2^{d + 1} + 2^d = 3 \cdot 2^d$ of some FC $j'$ with $\beta_{j'} \leq \beta_j$ (note that $i$ may not be assigned to FC $j'$, it just lies within the same segment), and hence $\delta_i \leq \ell_{ij'} + \beta_{j'} \leq 3 \cdot 2^d + \beta_j$. 
Combining this with \cref{eq:beta-bound}, we get $\beta_j < \beta_j$, a contradiction. This finishes the proof of the intermediate claim. 

Finally, since every segment in region $R_d^{(\kappa)}$ has a demand whose assigned FC $j^*$ has $\beta_{j^*} = 0$, we get for every demand $i$ and corresponding assigned FC $j$ in that segment 
\[
    \delta_i \leq \ell_{ij^*} + \beta_{j^*} \leq \ell_{ii^*}+\ell_{i^*j^*} \leq 2^{d + 1} + 2^d= 6 \cdot 2^{d-1} \leq 6 \cdot \ell_{ij}
\]
since the segments in region $R_d^{(\kappa)}$ are of length $2^{d+1}$. Hence, the total delay is $\sum_i \delta_i \leq 6 \cdot \sum_i \ell_{ij}$ where the right hand side is the cost of a minimum cost assignment, which is a lower bound on the optimal delay by \Cref{thm:k-regions=min-cost}. So we get a 6-approximation to the optimal delay using $3 \log \rho$ regions. 
\end{proof}

\begin{proof}[Proof of \Cref{cor:q-dim-euc}]
    The proof follows as above: we again consider distance scale buckets $D^1, \dots, D^{\log \rho}$ with $D^d$ at scale $[2^{d-1}, 2^d)$. Denote $B := 2 + \ceil{\sqrt{q}}$. We now make $B^q$ regions for each distance scale by splitting the space into hypercube ``segments'' $L_{s_1, \dots, s_q}$ of side length $2^{d+1}$ and grouping them into regions based on the index modulo $B$ for each coordinate. More precisely, for each $q$-length tuple of the form $(\kappa_1, \dots, \kappa_q)$, we create a region $R_d^{(\kappa_1, \dots, \kappa_q)}$ containing all segments $L_{s_1, \dots, s_q}$ with $s_1 \equiv \kappa_1 \pmod{B}, \dots s_q \equiv \kappa_q \pmod{B}$.  

     Again, assume for contradiction that some minimum-delay assignment of a region has a segment with no $\beta=0$ FC, consider the one whose minimum $\beta$ FC is largest. Then there must be a tight constraint between FC $j$ assigned to a demand in this segment, and demand $i$ from another segment. But then observe that any two segments in this region are separated by a distance of at least $(B - 1) \cdot 2^{d + 1} \geq (\sqrt{q} + 1) \cdot 2^{d+1}$, while $j$ can lie at most the distance scale $2^d$ outside of its segment. Hence,  
    \begin{align*}
        \beta_j &=\delta_i - \ell_{ij}\\
        &< \delta_i - \bigl[(\sqrt{q} + 1) \cdot 2^{d+1} - 2^d\bigr]\\
        &= \delta_i - (2\sqrt{q} + 1) \cdot 2^d 
    \end{align*}
    Each segment has a diameter of $\sqrt{q} \cdot 2^{d + 1}$, so as before, since we chose the segment maximizing the minimum $\beta$, then $i$ must be within distance $\sqrt{q} \cdot 2^{d + 1} + 2^d$ of some $j'$ with $\beta_{j'} < \beta_j$. Hence, we get the same contradiction as before:
    \begin{align*}
        \beta_j &< \delta_i - (2\sqrt{q} + 1) \cdot 2^d \\
        &\leq \beta_j + \sqrt{q} \cdot 2^{d + 1} + 2^d - (2\sqrt{q} + 1) \cdot 2^d \\
        &= \beta_j.
    \end{align*}
    
    So each segment has a $\beta=0$ FC, as desired. Hence, similarly to the previous proof, we get for every demand $\delta_i \leq \sqrt{q} \cdot 2^{d + 1} + 2^d = (4\sqrt{q} + 2) \cdot 2^{d-1} \leq (4\sqrt{q} + 2) \cdot \ell_{ij}$. In particular, we get a $(4\sqrt{q} + 2)$-approximation using $(2 + \ceil{\sqrt{q}})^q \log \rho$ regions.  
\end{proof}

\end{appendices}

\end{document}

%% file: preamble.tex
\usepackage[shortlabels]{enumitem}

\usepackage{amsmath} 
\usepackage{amssymb} 
\usepackage{mathrsfs} 
\usepackage{mathtools} 
\usepackage{amsthm}
\usepackage{thmtools}
\usepackage{dsfont} 

\usepackage{appendix}
\usepackage{fullpage}
\usepackage{caption}
\usepackage{subcaption}
\usepackage{graphicx}
\usepackage[table,xcdraw]{xcolor}
\usepackage{wrapfig}
\usepackage{floatrow}
\usepackage{multirow}
\usepackage{makecell}
\usepackage{nicefrac}
\usepackage[normalem]{ulem} 
\usepackage{pdfpages}

\usepackage[ruled,vlined,linesnumbered]{algorithm2e}
\definecolor{forestgreen}{rgb}{0.13, 0.55, 0.13}

\SetCommentSty{mycommfont}
\SetKwInOut{Input}{Input}
\SetKwInOut{Output}{Output}
\DontPrintSemicolon

\definecolor{ForestGreen}{rgb}{0.0333,0.4451,0.0333}
\definecolor{DarkRed}{rgb}{0.65,0,0}
\definecolor{Red}{rgb}{1,0,0}
\usepackage[linktocpage=true,
pagebackref=true,colorlinks,
linkcolor=DarkRed,citecolor=ForestGreen,
bookmarks,bookmarksopen,bookmarksnumbered]{hyperref}

\usepackage{cleveref}
\usepackage{thm-restate}

\usepackage{tcolorbox}
\newtcolorbox{important}{arc=2mm, colback=DarkRed!3!white, colframe=DarkRed!75!black}

\usepackage{url}

\newtheorem{theorem}{Theorem}[section]
\newtheorem{lemma}[theorem]{Lemma}
\newtheorem{claim}[theorem]{Claim}

\newtheorem{corollary}[theorem]{Corollary}

\newtheorem{definition}[theorem]{Definition}

\newcommand*\ms[1]{\mathscr{#1}}

\newcommand*\mcal[1]{\mathcal{#1}}

\DeclarePairedDelimiter\abs{\lvert}{\rvert}
\DeclarePairedDelimiter\norm{\lVert}{\rVert}%
\DeclarePairedDelimiter{\ceil}{\lceil}{\rceil}

\makeatletter
\let\oldabs\abs
\def\abs{\@ifstar{\oldabs}{\oldabs*}}
\let\oldnorm\norm
\def\norm{\@ifstar{\oldnorm}{\oldnorm*}}
\let\oldceil\ceil
\def\ceil{\@ifstar{\oldceil}{\oldceil*}}
\makeatother

\newcommand{\eps}{\varepsilon}

\DeclareMathOperator{\cost}{cost}

\newcommand{\OPT}{\text{OPT}}

\DeclareMathOperator*{\argmin}{arg\,min}

\newcommand{\lpobjective}[2] { #1 \quad &\mathrlap{#2}}
\newcommand{\stconstraint}[4] { \text{s.t.} \quad & \quad & #1 & #2 #3 & \qquad & #4}
\newcommand{\lpconstraint}[4] { & & #1 & #2 #3  && #4}

\newcommand{\D}{\mcal{D}}
\newcommand{\F}{\mcal{F}}
\newcommand{\Dem}{D}

\makeatletter
\newcommand{\leqnomode}{\tagsleft@true}
\newcommand{\reqnomode}{\tagsleft@false}
\makeatother

\newif\ifcomments
\commentstrue
\commentsfalse 

\ifcomments
\usepackage[colorinlistoftodos,prependcaption,textsize=tiny]{todonotes}

\definecolor{dnotecol}{rgb}{0.20, 0.50, 0.80}

\else 

\fi 